\documentclass[letterpaper,journal]{IEEEtran}
\IEEEoverridecommandlockouts

\usepackage{cite}
\usepackage{amssymb,amsfonts}
\usepackage{graphicx}
\usepackage{textcomp}
\usepackage{xcolor}
\usepackage[english]{babel}
\usepackage[compatibility=false]{caption}
\usepackage{subcaption}
\usepackage{balance}
\usepackage{times}
\usepackage{url}
\usepackage[hidelinks]{hyperref}
\usepackage[utf8]{inputenc}

\usepackage{amsthm}
\usepackage{booktabs}
\usepackage[ruled,vlined,linesnumbered]{algorithm2e}
\usepackage{fix-cm}

\usepackage{tabularx}
\usepackage{amsmath}
\newtheorem{theorem}{\textbf{Theorem}}

\newcolumntype{Y}{>{\centering\arraybackslash}X}
\definecolor{commentcolor}{RGB}{0, 0, 0} 

\begin{document}

\title{From Cloud to Crowd: Democratizing LLM Service with Decentralized Edge Collaboration for RAG
}

\author{Jiaxing Li, Hengzhi Wang, Feng Wang,~\IEEEmembership{Senior Member,~IEEE}, Chi Xu, Danyang Song, Ruixiao Zhang,\\ Edith C. H. Ngai,~\emph{Senior Member, IEEE}, Jiangchuan Liu,~\IEEEmembership{Fellow,~IEEE}

\thanks{\copyright~2026 IEEE. Personal use of this material is permitted. Permission from IEEE must be obtained for all other uses, in any current or future media, including reprinting/republishing this material for advertising or promotional purposes, creating new collective works, for resale or redistribution to servers or lists, or reuse of any copyrighted component of this work in other works. Accepted for publication in IEEE Transactions on Mobile Computing.}
\thanks{This work is partly supported by an NSERC Discovery Grant. (Corresponding author: Jiangchuan Liu.)}
\thanks{Jiaxing Li, Chi Xu and Jiangchuan Liu are with the School of Computing Science, Simon Fraser University, Burnaby, BC V5A 1S6, Canada (e-mail: \{jla641, chix, jcliu\}@sfu.ca).}
\thanks{Hengzhi Wang is with the College of Computer Science and Software Engineering, Shenzhen University, Shenzhen 518060, China (e-mail: whz@szu.edu.cn).}
\thanks{Feng Wang is with the Department of Computer and Information Science, The University of Mississippi, University, MS 38677, USA (e-mail: fwang@cs.olemiss.edu).}
\thanks{Danyang Song and Edith C. H. Ngai are with the Department of Electrical and Electronic Engineering, University of Hong Kong, Pok Fu Lam, Hong Kong (e-mail: dysong@connect.hku.hk, chngai@eee.hku.hk).}
\thanks{Ruixiao Zhang is with the Department of Computer Science, The University of Illinois Urbana-Champaign, Champaign, IL 61820, USA (e-mail: ruixiao.cs.zhang@gmail.com).}
}
\maketitle

\begin{abstract}
The rapid advancement of large language models (LLMs) has increased demand for scalable and cost-effective deployment, especially for mobile and edge devices. Cloud-hosted LLMs are powerful but expensive and difficult to scale due to vendor lock-in and high resource needs, resulting in high expenses and unstable performance under load. Recent efforts focus on deploying small language models (SLMs), distilled or pruned from LLMs, on resource-constrained edge devices to reduce costs and improve scalability. However, edge-based SLMs face limited knowledge coverage and notable accuracy gap compared to cloud-based LLMs. To address this, we present DEFRAG, a decentralized edge collaboration system for retrieval-augmented generation (RAG) that optimizes both retrieval and generation across heterogeneous edge devices. For retrieval, DEFRAG compresses and shares knowledge graphs, using hybrid retrieval to expand knowledge coverage. For generation, DEFRAG introduces an optimizer that adaptively selects SLMs and RAG parameters per query, balancing accuracy and cost. We implement DEFRAG on a heterogeneous edge testbed and evaluate it on benchmark QA datasets. We also test it under mobile route stress, non-uniform data placement, and a domain-specific QA workload. The results show that DEFRAG maintains stable service quality and cost efficiency under these broader settings. Results show that DEFRAG narrows the SLM–LLM accuracy gap, while reducing cost by up to 98.4\% and increasing peak throughput by up to 97.8\% over centralized services. These findings demonstrate the potential of DEFRAG for democratized LLM services at the edge.
\end{abstract}

\begin{IEEEkeywords}
Edge computing, decentralized systems, large language model applications, retrieval-augmented generation.
\end{IEEEkeywords}

\section{Introduction}
\IEEEPARstart{T}{he} remarkable progress in large language models (LLMs) has accelerated the proliferation of AI-powered applications across a range of domains, including content creation, intelligent assistants, and collaborative robotics~\cite{qu2025mobile}. Increasingly, these applications are accessed from mobile and edge clients (e.g., smartphones, laptops, wearables, and IoT gateways), where usage is often frequent and interactive. As a result, per-query and per-token costs become a first-order barrier to scalability and accessibility. However, the prevailing paradigm for deploying LLMs remains largely cloud-centric, relying on centralized supercomputing resources to deliver state-of-the-art performance~\cite{wang2023tabi}.

While this approach enables access to powerful models, it is constrained by two critical limitations. These limitations are particularly costly for mobile and edge clients, where frequent interactive queries make per-token charges add up quickly. First, the vendor-locked nature of cloud-based LLM services raises the barriers to entry for widespread adoption. Subscription and token-based pricing schemes make high-quality language models prohibitively expensive for many users, hindering broader accessibility~\cite{chenfrugalgpt,zhang2024edgeshard}. Furthermore, monolithic model architectures complicate the rapid integration of newly emerging knowledge, rendering real-time personalization and timely adaptation challenging and inefficient~\cite{asai2024self,gao2023retrieval}. Second, centralized frameworks suffer from limited scalability. As the demand for LLM services grows, vertical scaling—adding capacity within individual data centers—leads to steeply increasing costs across hardware, software, and energy consumption~\cite{samsi2023words}. These costs are further exacerbated by inherent resource utilization imbalances. During off-peak periods, resources are underutilized; during peak demand, oversubscription leads to both compute and network congestion, causing unpredictable tail latency and limiting scalability~\cite{zhao2023scalable,reidys2025coach}.

\begin{figure}[t]
\centering
\includegraphics[width=\columnwidth]{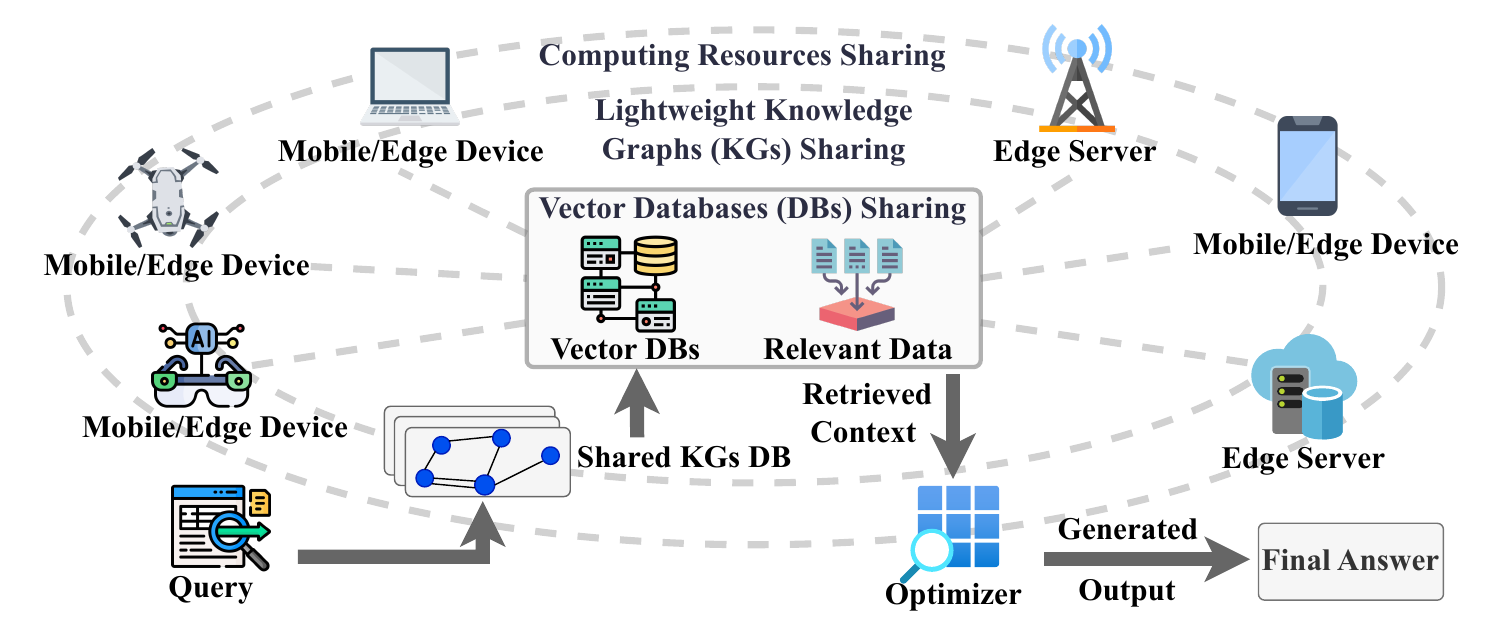}
\caption{Overview of the \textbf{D}ecentralized \textbf{E}dge Collaboration \textbf{F}or \textbf{RAG} (DEFRAG) system architecture.}
\label{fig:overview}
\end{figure}

To address the high cost and limited scalability of centralized LLM services, leveraging idle computing resources on edge devices for distributed deployment has emerged as a promising approach~\cite{huang2024toward}. Existing research mainly follows two directions. The first partitions large-parameter LLMs across edge nodes, assigning inference tasks based on bandwidth and processing capacity to reduce the load on centralized systems~\cite{he2024large,zhang2024edgeshard}. However, this method introduces significant coordination and communication overhead, resulting in increased latency and possible inference stalls~\cite{zhang2024edgeshard}. The second direction deploys language models of varying sizes, especially small language models (SLMs), independently on edge devices. SLMs are typically obtained by distilling or pruning larger LLMs~\cite{qwen3}, making them lightweight variants of LLMs with much lower resource requirements~\cite{wang2023tabi}. Integrating retrieval-augmented generation (RAG) at the edge further expands SLM knowledge coverage by retrieving external database content, narrowing the knowledge gap between SLMs and LLMs, and improving answer accuracy~\cite{shojaee2025federated,ouyang2025adarag}.

Our measurement shows that SLMs on individual edge devices are still constrained by resource capacity (Section~\ref{section:measurement}). Most mobile edge devices can support SLMs up to 4B parameters; without further optimization, the accuracy gap with a 235B LLM can reach 38\%. Access to external databases via RAG can reduce this gap to 16.5\%, enabling edge-based SLMs to approach the accuracy of cloud-based LLMs. Nevertheless, single-device resource limits remain a bottleneck. Edge collaboration is therefore essential to narrow the accuracy gap and enhance cost efficiency and scalability.

In this context, we propose the \textbf{D}ecentralized \textbf{E}dge Collaboration \textbf{F}or \textbf{RAG} (DEFRAG) system architecture, which organizes heterogeneous edge nodes into a unified network that shares both data storage and computational resources. To make LLM-quality service affordable and scalable for mobile and edge clients, decentralized RAG must address several challenges. First, device heterogeneity complicates system coordination. Edge devices vary greatly in computational capacity, supporting different model sizes and quantization precisions, making unified SLM inference challenging. Second, query-aware adaptation is challenging. Aligning RAG parameters and SLM selection with each query’s demands is nontrivial, as both excessive and insufficient retrieval can degrade performance, and SLMs differ in their ability to process retrieved content. Third, network-wide knowledge fragmentation limits coverage. Comprehensive knowledge bases cannot be stored on individual edge devices; thus, efficient knowledge sharing and collaboration across the network are essential for maintaining inference quality, yet this remains an open challenge.

To address these challenges, we design the DEFRAG system architecture as shown in Figure~\ref{fig:overview}. First, to overcome device storage limits and knowledge fragmentation, we compress knowledge graphs into lightweight entity-relation representations without embeddings, enabling efficient caching and sharing across edge devices. For data in vector databases, sparse retrieval over these lightweight graphs identifies relevant entities and fetches the corresponding text and embeddings from peer devices, supporting broad and deep knowledge through cross-device collaboration. Second, motivated by our data analysis, we introduce a query-adaptive optimizer based on Bayesian online learning. Unlike linear-contextual bandits, Bayesian optimization better models complex, nonlinear interactions among queries, retrieval contexts, and generation configurations, providing robust accuracy even with limited data. The optimizer updates predictions through online feedback, dynamically selecting RAG parameters and SLM configurations to balance accuracy and cost efficiency. Finally, to handle resource heterogeneity, devices are grouped by computational capacity, and the optimizer assigns models of appropriate size and precision to peer devices, enabling coordinated and efficient SLM inference. These advances allow SLMs to approach the accuracy of centralized LLMs while improving cost efficiency and scalability, supporting broader and more affordable access to LLMs.

We implemented and evaluated DEFRAG on a testbed with heterogeneous edge devices, analyzing its performance across accuracy, cost, and scalability. For accuracy, the optimizer’s adaptive selection of RAG parameters and SLM configurations narrowed the gap between edge-based SLMs and cloud-based LLMs. On two benchmark QA datasets, edge accuracy improved from 61.5\% and 76.6\% to 67.8\% and 81.4\%, reducing the accuracy gap from 12.7\% and 7.1\% to 6.4\% and 2.3\%. For cost, DEFRAG reduced expenses by up to 98.4\% with full peer resource sharing. For scalability, DEFRAG achieved up to 97.8\% higher peak throughput than centralized services as concurrent user numbers increased. We also test DEFRAG under mobile route stress, non-uniform data placement, and a domain-specific QA workload. The results show that DEFRAG maintains stable service quality and cost efficiency under these broader settings. These results show that DEFRAG effectively leverages heterogeneous edge resources to deliver accurate, cost-efficient, and scalable LLM services.

DEFRAG connects edge devices in a decentralized network, enabling users to share and access models and knowledge. This approach shifts LLM service from centralized cloud delivery to collaborative provisioning by the crowd, advancing the democratization of LLM capabilities. 

Our contributions can be summarized as follows:
\begin{itemize}
    \item We propose a lightweight edge-collaborative RAG retrieval strategy that compresses and distributes knowledge graphs for efficient coverage under storage constraints.
    \item We design an optimizer that adapts RAG generation parameters per query and updates online from historical data, balancing accuracy and cost efficiency across heterogeneous edge devices.
    \item We implement and deploy the DEFRAG prototype on a heterogeneous edge devices testbed. Extensive experiments demonstrate that DEFRAG significantly narrows the SLM–LLM accuracy gap, achieves substantial cost savings, and provides strong scalability compared to centralized baselines.
\end{itemize}

\section{Motivation and Observation}
\label{section:measurement}
In this section, we comprehensively evaluate recent edge SLMs and their gap with LLM on benchmark QA datasets, analyzing performance across model scales, quantization, query types, devices, and RAG strategies. The results reveal key challenges and opportunities for improvement.

\subsection{Preliminary}
\label{preliminary}

We evaluate both SLM and LLM models from the Qwen3 series~\cite{qwen3}, as it provides high-quality, fully open-source models at multiple parameter sizes for fair comparison. The SLMs include 1.7B, 4B, 8B, and 14B variants, with Qwen3-32B as an edge-server reference point and Qwen3-235B as the reference LLM. For comprehensive comparison, we consider 4-bit, 8-bit, and half-precision quantizations, implemented via llama.cpp~\cite{llama.cpp}. Table~\ref{tab:qwen3_model_sizes} reports GPU memory requirements for each configuration.

To assess practical performance on heterogeneous edge devices, we evaluate five representative edge devices: Jetson Orin Nano, Jetson AGX Orin, Galaxy S25 Ultra (Snapdragon 8 Elite), RTX 4090, and RTX 5090. These devices cover both PC-class and mobile platforms for applications such as smartphones, laptops, AR glasses, and robots.
For question-answer (QA) evaluation, we use Natural Questions (NQ)\cite{kwiatkowski2019natural} and HotpotQA\cite{yang2018hotpotqa}, which contain single-hop and multi-hop queries, respectively. To ensure consistency in RAG and GraphRAG evaluation,  we sample top 500 QA pairs from each dataset by selecting questions supported by Wikipedia pages referenced more than 10 times.

By leveraging RAG, SLMs can access the same breadth of knowledge as LLMs, thereby narrowing the accuracy gap caused by knowledge limitations. To measure the current gap between SLMs and LLMs, we adopt two advanced RAG methods: standard dense retrieval~\cite{karpukhin2020dense} and a knowledge graph-based approach adapted from GraphRAG~\cite{edge2024local}.

\begin{table}[t]
\centering
\footnotesize
\renewcommand{\arraystretch}{1.0}
\caption{Qwen3 models memory usage across different parameter size and quantization precisions.}
\begin{tabularx}{\linewidth}{lXXX}
    \toprule
    \textbf{Models} & \textbf{INT4} & \textbf{INT8} & \textbf{FP16} \\
    \midrule
    Qwen3-1.7B & 1.4 GB & 2.2 GB & 4.1 GB \\
    Qwen3-4B & 2.6 GB & 4.4 GB & 8.1 GB \\
    Qwen3-8B & 5.2 GB & 8.9 GB & 16 GB \\
    Qwen3-14B & 9.3 GB & 16 GB & 30 GB \\
    Qwen3-32B & 20 GB & 35 GB & 66 GB \\
    Qwen3-235B & 142 GB & 250 GB & 470 GB \\
    \bottomrule
\end{tabularx}
\label{tab:qwen3_model_sizes}
\end{table}

\subsection{Data-Driven Analysis}
\label{subsection:Data-Driven Analysis}

\begin{figure}[t]
    \centering
    \includegraphics[width=0.8\linewidth]{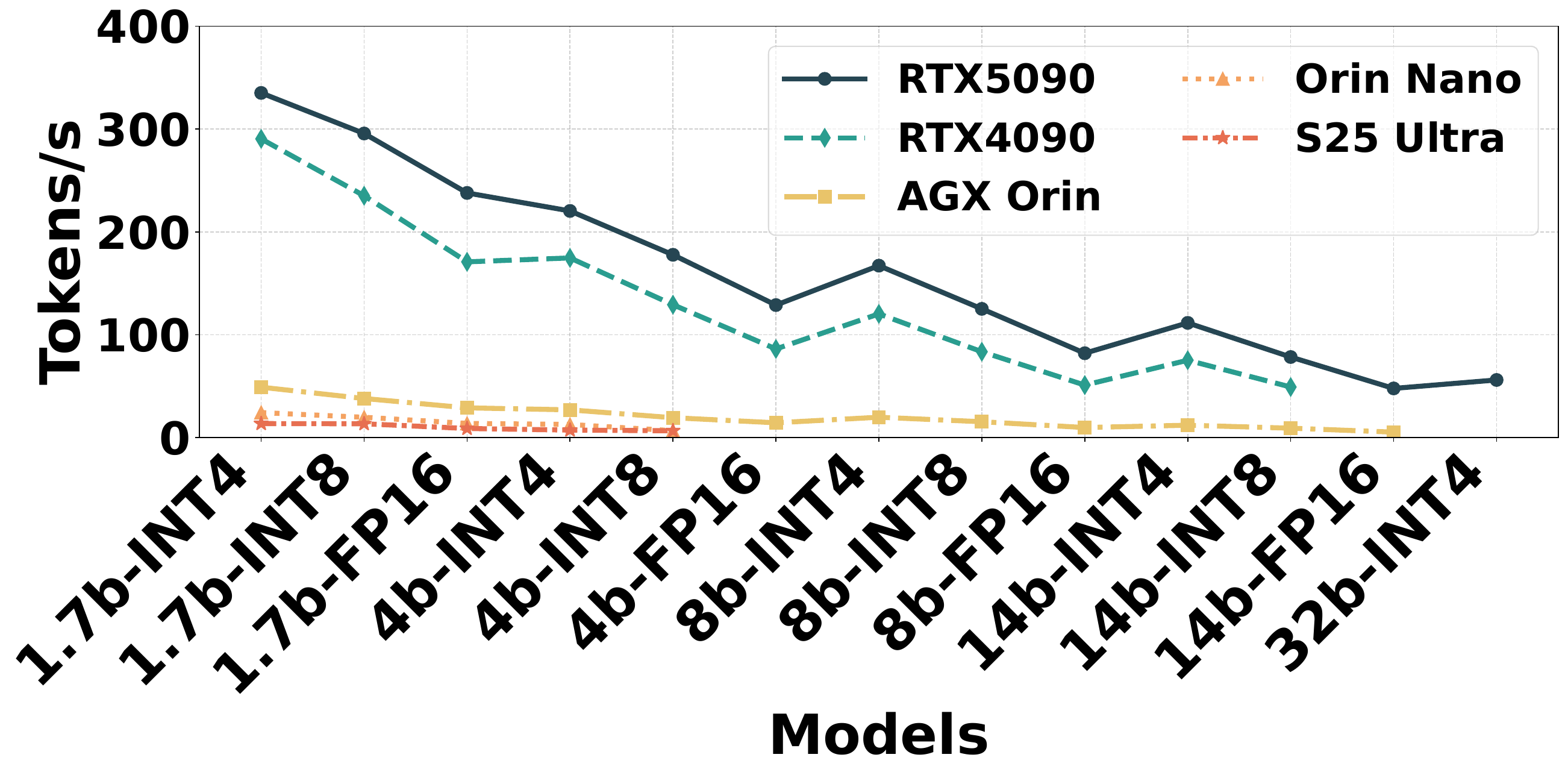}
    \caption{Average throughput (tokens/s) of various SLMs across heterogeneous edge devices.}
    \label{fig:llm_tokens}
\end{figure}

Edge computing environments comprise a diverse array of devices, from high-performance servers to lightweight mobile hardware, with significant variability in practical performance. Prior research has seldom provided fine-grained evaluation of the capabilities of these devices, particularly for SLMs of different sizes and quantization precisions. To address this, we systematically measure the performance of representative edge devices. As shown in Figure~\ref{fig:llm_tokens}, these devices can be grouped by their token generation rates: edge servers (such as RTX 4090 and RTX 5090) can stably serve larger models (e.g., 14B), while edge devices (such as Jetson Orin Nano, Jetson AGX Orin, and Galaxy S25 Ultra) typically support smaller models (e.g., 1.7B). Edge servers provide stable, consistent performance, whereas edge devices display greater variability. Effectively coordinating such heterogeneous devices, each with distinct capabilities, to collaboratively serve SLMs as a unified system remains an open challenge.

Recent work has explored combining RAG with SLMs on edge devices~\cite{ouyang2025adarag,shojaee2025federated,fan2025minirag}, but systematic evaluations across model sizes, quantization precisions, RAG strategies, and query types remain limited—particularly regarding the accuracy gap between SLMs and LLMs. To address this, we use the Qwen3 series to evaluate the SLM--LLM accuracy gap on Natural Questions and HotpotQA using standard RAG and GraphRAG. Accuracy is assessed by GPT-4.1~\cite{openai2025gpt41api} on 500 QA pairs, using a binary scoring method~\cite{wangpandalm}.

As shown in Figure~\ref{fig:RAG_comparison}, for Natural Questions, standard RAG with a static TOP-K=7 strikes a good balance between context sufficiency and overload, significantly narrowing the accuracy gap between SLMs and LLMs. In contrast, on the multi-hop HotpotQA dataset, standard RAG fails to capture logical relationships, resulting in limited accuracy improvements. We therefore adopt a simplified version of GraphRAG, since the dense retrieval results of the GraphRAG approach often exceed the processing capacity of SLMs~\cite{fan2025minirag}. Specifically, we retain only essential information—the title and description of the top 10 entities, the source, target, and description of the top 10 relationships, and the text of the top 2 data chunks, each selected using static heuristics. This streamlined retrieval is better suited to SLM capacity, improving reasoning accuracy and further reducing the SLM–LLM gap on HotpotQA.

\begin{figure}[t]
    \centering
    \includegraphics[width=0.8\linewidth]{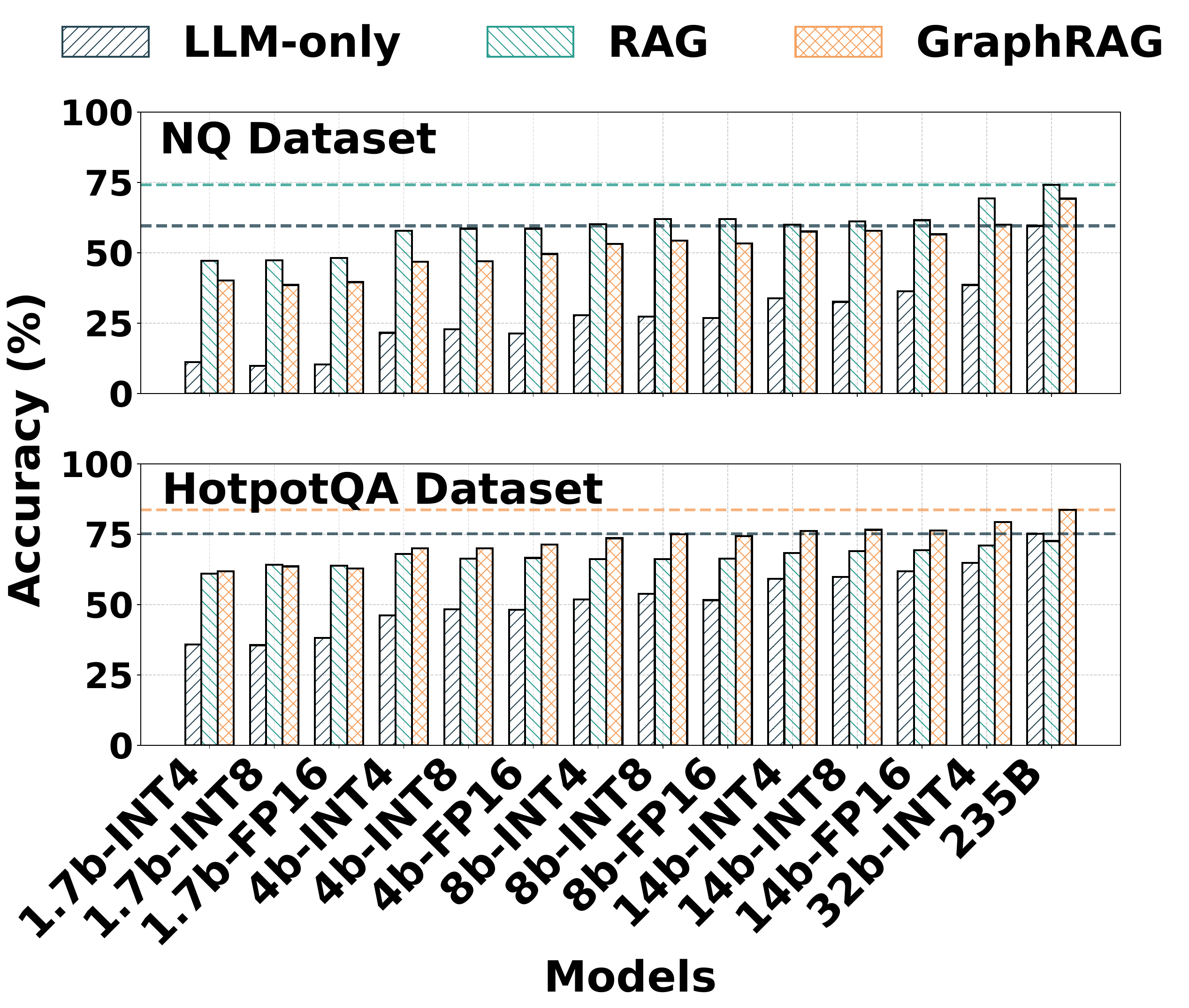}
    \caption{Accuracy comparison on NQ and HotpotQA with direct generation, RAG, and GraphRAG.}
    \label{fig:RAG_comparison}
\end{figure}

These results indicate that RAG can substantially narrow the accuracy gap between SLMs and LLMs. However, the optimal RAG configuration, including the number of entities, relationships, and data chunks, varies by query type (e.g. single-hop vs. multi-hop). Standard RAG can be considered a special case of GraphRAG where no entities and relationships are retrieved. Achievable accuracy also depends on the SLM chosen. Therefore, dynamically selecting both SLMs and RAG parameters for each query remains a key challenge.

As shown in Table~\ref{tab:threshold_coverage}, raising the relevance score\footnote{Indicates how well retrieved content relates to the query.} threshold for retrieved content allows SLMs to maintain higher accuracy, though with reduced coverage\footnote{Proportion of queries with average relevance score above threshold.}. This suggests that leveraging relevance scores to characterize query patterns and guide RAG generation parameters tuning can further close the gap to LLM accuracy on suitable queries. Adaptive selection of SLMs and RAG configurations based on query relevance score is therefore a promising approach for accurate edge inference.

\begin{table}[t]
\centering
\footnotesize
\renewcommand{\arraystretch}{1.0}
\caption{Coverage of QA pairs and mean SLMs accuracy at different entity relevance score thresholds (Dataset: HotpotQA, Method: GraphRAG)}
\begin{tabularx}{\linewidth}{lYYY}
    \toprule
    \textbf{Threshold} & $\theta \ge 0.50$ & $\theta \ge 0.55$ & $\theta \ge 0.65$ \\
    \midrule
    \textbf{Coverage}  & 459/500 (91.8\%) & 363/500 (72.6\%) & 77/500 (15.4\%) \\
    \midrule
    \textbf{Accuracy} (1.7B)  & 65.0\% & 67.6\% & 80.1\% \\
    \textbf{Accuracy} (4B)  & 72.7\% & 75.2\% & 83.5\% \\
    \textbf{Accuracy} (8B)  & 76.4\% & 78.8\% & 83.5\% \\
    \textbf{Accuracy} (14B) & 78.2\% & 80.9\% & 88.7\% \\
    \bottomrule
\end{tabularx}
\label{tab:threshold_coverage}
\end{table}

Another observation is that a complete graph database, including both the knowledge graph and vector database, can be large. Even when limited to popular Wikipedia pages from NQ and HotpotQA, database sizes reach 983.83MB (139 pages) and 1473.98MB (231 pages), respectively. In practice, each edge device stores only a subset, limiting knowledge coverage. Queries involving out-of-scope content must be broadcast to neighboring devices, which is inefficient and difficult to handle at scale. Efficient data sharing and collaboration among edge devices therefore remains a key challenge.

We observe that the knowledge graph database without embeddings (entities and relationships only) is much smaller—about 1/35 the size of the vector database (see Table~\ref{tab:db_sizes_vertical}). This suggests an effective strategy: cache shared, lightweight knowledge graphs on each edge device, and use sparse retrieval~\cite{karpukhin2020dense} to identify relevant content in neighboring vector databases, enabling accurate extraction of data chunks and embeddings. This approach efficiently expands both the breadth and depth of knowledge accessible to each device.

\begin{table}[t]
\centering
\footnotesize
\renewcommand{\arraystretch}{1.0}
\caption{Storage sizes of the Knowledge Graph database (KG DB) and Vector database (Vector DB) for the NQ and HotpotQA datasets.}
\begin{tabularx}{\linewidth}{@{}lYYY@{}}
    \toprule
    \textbf{Dataset} & \textbf{KG DB} & \textbf{Vector DB} & \textbf{Scale} \\
    \midrule
    Natural Questions~\cite{kwiatkowski2019natural} & 27.42MB & 956.41MB & $\sim$35$\times$ \\
    HotpotQA~\cite{yang2018hotpotqa} & 40.38MB & 1433.6MB & $\sim$36$\times$ \\
    \bottomrule
\end{tabularx}
\label{tab:db_sizes_vertical}
\end{table}

\subsection{Challenges and Opportunities}
Based on our data-driven analysis, several key challenges and corresponding opportunities emerge in advancing distributed LLM services with the DEFRAG system architecture, as summarized below:

\textbf{Challenge 1: Coordinated Generation on Heterogeneous Edge Devices.} Edge environments consist of diverse device types with significant variability in computational capacity, supporting different model sizes and quantization precisions (Figure~\ref{fig:llm_tokens}). Effectively coordinating these heterogeneous devices to serve SLMs as a unified system remains a challenge.

\textbf{Opportunity:} Systematic grouping of devices based on capability, together with dynamic resource scheduling that assigns models of suitable size and precision to corresponding hardware, can improve overall resource efficiency. This approach enables coordinated and effective SLM inference, facilitating integrated utilization of edge resources.

\textbf{Challenge 2: Query-Adaptive Selection of RAG Generation Parameters.} As shown in Figure~\ref{fig:RAG_comparison}, aligning RAG retrieval parameters with SLM capacity is crucial for inference accuracy and efficiency. Both excessive and insufficient information can degrade performance, and SLMs differ in their ability to process retrieved content. Dynamically selecting parameters for each query, therefore, remains a key challenge.

\textbf{Opportunity:} As shown in Table~\ref{tab:threshold_coverage}, analyzing relevance scores between the query and retrieved context enables dynamic selection of SLM configurations and RAG parameters for each query. This approach allows edge devices to achieve high accuracy while maintaining efficient resource utilization.

\textbf{Challenge 3: Scalable Knowledge Retrieval for RAG Across Edge Devices.}
Comprehensive knowledge bases, including knowledge graphs and vector databases, are typically too large for individual edge devices to store in full, resulting in limited knowledge coverage per device. Efficient cross-device knowledge sharing and processing remain a challenge.

\textbf{Opportunity:} Compressing the knowledge graph into a lightweight representation of essential entities and relationships enables efficient caching across edge devices, ensuring broad knowledge coverage (Table~\ref{tab:db_sizes_vertical}). Sparse retrieval can then dynamically locate relevant entities and access corresponding vector data from peer devices to address depth requirements. This cross-device collaboration efficiently expands knowledge scope and improves overall coverage.

\section{System Design and Modeling}
\begin{figure}[t]
\centering
\includegraphics[width=\columnwidth]{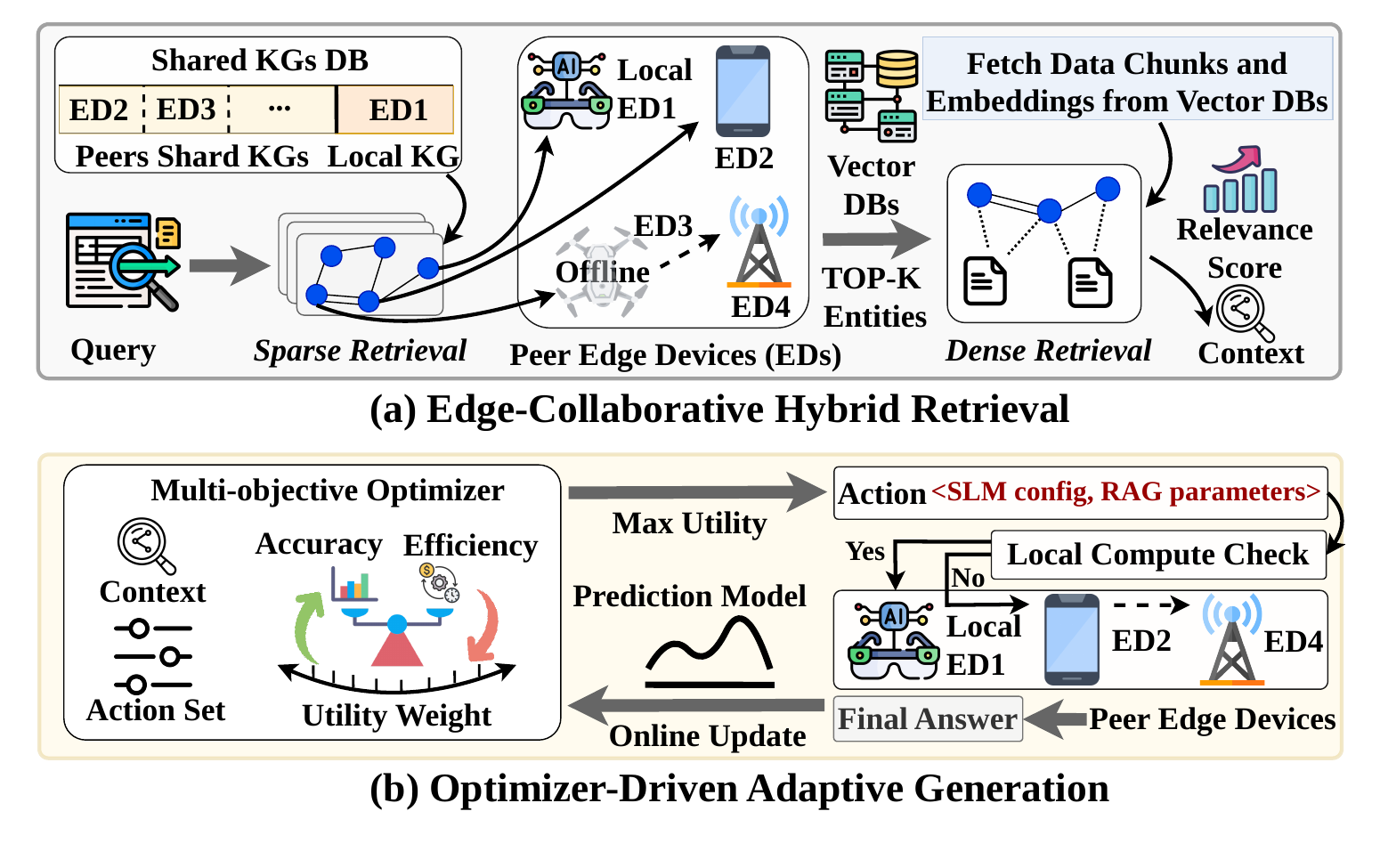}
\caption{Design of the DEFRAG system architecture and end-to-end workflow, combining edge-collaborative hybrid retrieval with optimizer-driven adaptive generation and optional peer offloading.}
\label{fig:framework_design}
\end{figure}

In this section, we present DEFRAG, a decentralized RAG architecture that enables collaboration among edge devices to address the challenges above. We first provide an overview of the architecture, followed by details on the modeling of its core components.

\subsection{Methodology Overview}

DEFRAG enhances both retrieval and generation across a group of edge devices. For retrieval, it expands knowledge access by sharing lightweight knowledge graphs among peers. For generation, it adaptively tunes SLMs and RAG parameters based on query-specific context features extracted during retrieval. This design balances accuracy and efficiency, enabling the system to approach cloud LLM accuracy while running SLMs on heterogeneous edge devices. Figure~\ref{fig:framework_design} shows the overall workflow of DEFRAG. Below, we detail its retrieval and generation optimizations.

\textbf{Edge-collaborative hybrid retrieval.} To broaden retrieval, DEFRAG extends the knowledge scope of edge-based RAG by sharing lightweight knowledge graphs with peer devices. To support breadth data coverage, these graphs are designed without dense vector representations. Following hybrid retrieval strategies~\cite{karpukhin2020dense}, the system adopts a two-stage process, further enhancing depth through collaboration among devices.

In the first stage, queries use coarse-grained sparse retrieval~\cite{lewis2020retrieval} over the shared knowledge graphs database (KGs DB) with term-based methods such as BM25 to identify the TOP-K relevant entities. In the second stage, the system locates the corresponding vector databases (Vector DBs) on peer edge devices based on the retrieved entities, fetching data chunks and their embeddings. If a target device is offline, retrieval proceeds to the next nearest device by proximity until the entity is found or retrieval is abandoned. Fine-grained dense retrieval~\cite{karpukhin2020dense} further reranks the TOP-K relevant entities, enhancing retrieval depth.

As shown in Table~\ref{tab:threshold_coverage}, relevance scores from dense retrieval can characterize query patterns. These scores are summarized into context features—including count, mean, standard deviation, maximum, and minimum for entities and data chunks—which are passed to the optimizer to support adaptive generation (see Section~\ref{subsection:modeling} for details).

\textbf{Optimizer-driven adaptive generation.} To support efficient and accurate generation, DEFRAG employs a multi-objective optimizer to select generation parameters for each query. These parameters include the choice of SLM size, quantization precision, and a combination of structured and unstructured RAG retrieval outputs. Structured information (entities and relationships) is extracted from shared knowledge graphs, while unstructured evidence (data chunks) is obtained from vector databases on local and peer devices.

The optimizer dynamically adjusts its strategy based on context features derived during retrieval, which reflect the structure and complexity of each query. For example, when a query involves multiple high-relevance entities, indicating a need for multi-entity reasoning, the optimizer selects higher-capacity models and richer structured input. In contrast, for a query focused on a single high-confidence entity, often corresponding to a single-hop question, it chooses smaller models and minimal input to improve efficiency.

Each candidate action receives a utility score balancing expected accuracy and efficiency, and the action with the highest score is selected. If the local device cannot execute it, the request is forwarded to the nearest capable peer device; otherwise, generation is performed locally. All interactions are logged to support online optimizer updates and ongoing refinement of the accuracy and efficiency trade-off.

By jointly optimizing retrieval and generation, DEFRAG leverages heterogeneous edge resources to democratize LLM services through decentralized collaboration among the crowd.

\subsection{Contextual Multi-Objective Multi-Armed Bandit Problem}
\label{subsection:modeling}

\begin{table}[t]
\centering
\footnotesize
\renewcommand{\arraystretch}{1.0}
\caption{Summary of frequently used notations.\label{table:notation_algorithms}}
\begin{tabularx}{\linewidth}{cX}
\toprule
\textbf{Notation} & \textbf{Explanation} \\
\midrule
$c_t$, $a_t$ & Context vector and action at query round $t$ \\
$\mathcal{C}$, $\mathcal{A}$ & Context space and combinatorial action space \\
$E^m(a)$ & Model efficiency of action $a$ (by model-cost) \\
$E^R(a)$ & Input efficiency of action $a$ (by input length) \\
$E(a)$ & Efficiency indicator of action $a$ \\
$ \alpha(c_t,a)$ & Accuracy indicator of action $a$ under context $c_t$ \\
$U(c_t,a)$ & Utility combining accuracy and efficiency \\
$\gamma$ & Weight of accuracy in the utility ($0\!\le\!\gamma\!\le\!1$) \\
$\bar{E}_{W}$ & Average efficiency over the latest $|W|$ queries \\
$\bar{\alpha}_{W}$ & Average accuracy over the latest $|W|$ queries \\
$\alpha_{\text{C}}$ & Accuracy of centralized SLMs/LLM baseline \\
\bottomrule
\end{tabularx}
\end{table}

To better understand and address the Multi-objective Optimizer, we formalize it as a Contextual Multi-Objective Multi-Armed Bandit (CMOMAB) problem tailored to our DEFRAG system architecture. Frequently used notations are summarized in Table~\ref{table:notation_algorithms} for reference.

\textbf{Context space.}
We formally define the context at each query round $t$ as:
$$
c_t := [n_t^e, \bar{c}_t^e, \tilde{c}_t^e, \hat{c}_t^e, \check{c}_t^e, n_t^d, \bar{c}_t^d, \tilde{c}_t^d, \hat{c}_t^d, \check{c}_t^d] \in \mathcal{C},
$$
where $n_t^e$ and $n_t^d$ are the numbers of entities and data chunks retrieved after initial sparse filtering. $\bar{c}_t^e$ and $\tilde{c}_t^e$ denote the mean and standard deviation of relevance scores for entities, while $\hat{c}_t^e$ and $\check{c}_t^e$ are their maximum and minimum scores. Analogously, $\bar{c}_t^d$, $\tilde{c}_t^d$, $\hat{c}_t^d$, and $\check{c}_t^d$ are the mean, standard deviation, maximum, and minimum relevance scores for selected data chunks. Here, $\mathcal{C}$ denotes the context space. The context design and its impact are evaluated through ablation studies in Section~\ref{subsection:ablation}.

\textbf{Action space.}
We define the action at each query round $t$ as a configuration consisting of four distinct parameters:
$$
a_t := [m_t, e_t, r_t, d_t] \in \mathcal{A} := \mathcal{M} \times \mathcal{E} \times \mathcal{R} \times \mathcal{D},
$$
where $\mathcal{M}$ denotes the set of selected SLMs with different sizes and quantization precision; $\mathcal{E}$, $\mathcal{R}$, and $\mathcal{D}$ are the TOP-K selected entities, relationships, and data chunks, respectively. The action space $\mathcal{A}$ is combinatorial, spanning all generation parameters (RAG and SLM configurations).

\textbf{Utility indicators.}
To select the most suitable action $a_t$ at each query round $t$, we use two utility indicators. The first is action accuracy. It is conditioned on the current query context $c_t$ and denoted as $\alpha(c_t, a_t)$. If the selected SLM’s response matches the reference answer, $\alpha(c_t, a_t) = 1$; otherwise, $\alpha(c_t, a_t) = 0$. However, it is computationally infeasible to exhaustively evaluate all actions by running each set of RAG generation parameters and comparing outputs. To address this, we use Bayesian online learning to estimate the probability of accuracy for each action given the current query context and update the model as outcomes are observed. This enables the system to adapt dynamically to changing query patterns (see Section~\ref{section:solution} for details). The model policy is therefore learned from observed context-action feedback rather than from model size, so a smaller or quantized model can be selected when its estimated accuracy is higher for the current query context.

The second indicator is the efficiency of each action $a$, denoted as $E(a)$, which reflects both the efficiency of the selected SLM and the chosen RAG parameters, and is defined below with higher value indicating greater efficiency:

\begin{equation}
E(a) = w_m E^{\text{m}}(a) + w_r E^{\text{R}}(a),
\label{eq:1}
\end{equation}
where $E^{\text{m}}(a) = 1/\kappa_m$, and $\kappa_m$ is an offline-profiled model-cost coefficient for the selected SLM configuration. The coefficient is computed from model memory footprint (size and quantization), FLOPs-level workload, and response latency.

The term $E^{R}(a) = (\mathcal{T}_{\max} - \mathcal{T}(a)) / \mathcal{T}_{\max}$ represents the input's efficiency coefficient, where $\mathcal{T}(a)$ is the input token length and $\mathcal{T}_{\max}$ is the maximum allowed. Larger RAG parameter values increase input length, which raises prefill and decode cost and latency~\cite{zhu2024accelerating}, thereby reducing $E^{R}(a)$.

The weights $w_m$ and $w_r$ (with $w_m + w_r = 1$) balance model and input efficiency. Higher $w_m$ discourages actions with high model-cost coefficients, while higher $w_r$ avoids long input lengths caused by large RAG parameters. These weights are dynamically adjusted online, allowing the system to balance model and input efficiency over time.

\textbf{Short-term objective.} At each query round $t$, the short-term objective is to select the action $a_t$ from the action space $\mathcal{A}$ that yields the highest utility for the current query context $c_t$. Utility combines accuracy $\alpha(c_t, a)$ and efficiency $E(a)$, aggregated by a weighted geometric mean~\cite{lan2010axiomatic} for a smooth and interpretable trade-off:

\begin{equation}
\max_{a \in \mathcal{A}} \quad \left[\alpha(c_t, a)\right]^\gamma \cdot \left[E(a)\right]^{1-\gamma}
\label{eq:2}
\end{equation}
Here, $\gamma \in [0, 1]$ is a weighting parameter that balances accuracy and efficiency, and is adjusted online to align with the long-term objective.

\textbf{Long-term objective.}
While the short-term objective balances accuracy and efficiency for each query, the long-term objective dynamically adjusts this trade-off to sustain performance. The aim is to keep average accuracy above a target threshold, measured over a sliding window, while maximizing efficiency. Formally:

\begin{equation}
\begin{aligned}
& \max \quad \lim_{|W|\to\infty} \bar{E}_{W} \\
& \text{s.t.} \quad \lim_{|W|\to\infty} \bar{\alpha}_{W} 
               \ge \alpha_{\text{C}} - \varepsilon
\end{aligned}
\end{equation}
where $\bar{\alpha}_{W} = \frac{1}{|W|} \sum_{t\in W} \alpha (c_t, a)$  is the average accuracy over the most recent $|W|$ queries, $\bar{E}_{W} = \frac{1}{|W|} \sum_{t\in W} E(a_t)$ is the average efficiency, $\alpha_{\text{C}}$ is the centralized SLMs or LLM baseline accuracy, and $\varepsilon$ is a tolerance constant.

The long-term objective guides the short-term trade-off by adaptively adjusting the weighting parameters $\gamma$, $w_m$, and $w_r$. When sliding window accuracy $\bar{\alpha}_{W}$ drops below the target $\alpha_{\text{C}}$, the system increases $\gamma$ to emphasize accuracy; if accuracy exceeds the target, $\gamma$ is decreased to prioritize efficiency. If the average input token count or model-cost indicator exceeds its predefined threshold, $w_m$ and $w_r$ are adjusted.

\section{Bayesian Online Learning Solution}
\label{section:solution}
In this section, we propose the \textbf{D}ecentralized \textbf{E}dge Collaboration \textbf{F}or \textbf{RAG} \textbf{B}ayesian \textbf{O}nline \textbf{L}earning (DEFRAG-BOL) algorithm to address the CMOMAB problem above. We also discuss deployment strategies to reduce computational overhead and analyze the regret bound of the algorithm.

\begin{algorithm}[t]
\caption{\textbf{D}ecentralized \textbf{E}dge Collaboration \textbf{F}or \textbf{RAG} \textbf{B}ayesian \textbf{O}nline \textbf{L}earning (DEFRAG-BOL)}
\label{alg:defrag-bol}
\SetAlgoLined
\DontPrintSemicolon

\textbf{Input:} Action space $\mathcal{A}$, total query rounds $T$. \\
\textbf{Hyperparams:} Kernel $k(\cdot, \cdot)$, exploration params $T_0, \beta_t$, window size $W_{\max}, K$, target accuracy $\alpha_C$, step sizes $\delta_\gamma, \delta_w$, efficiency limits $L_{\lim}, R_{\lim}$. \\
\textbf{Definitions:} Let $l_t$ denote the input token length and $r_t$ indicate model-cost usage at query round $t$. \\
\textbf{Initialize:} $\mathrm{GP} \leftarrow \emptyset$, history $W \leftarrow \emptyset$, weights $\gamma, w_m, w_r$. Compute initial $E(a)$ via Eq.~\ref{eq:1}.

\For{$t = 1,\dots,T$}{
    \textbf{Observe} query context $c_t$\;

    \If{$t \le T_0$}{
        \tcp{\textcolor{commentcolor}{Phase I: Random Exploration}}
        $a_t \sim \text{Unif}(\mathcal{A})$ \;
    }
    \Else{
        \tcp{\textcolor{commentcolor}{Phase II: UCB Exploitation}}
        \For{$a \in \mathcal{A}$}{
            $\mu_{t-1}, \sigma_{t-1} \gets \mathrm{GP}.\text{Predict}(c_t, a)$\;
            $\widehat{\alpha}(c_t,a) \gets \mu_{t-1} + \sqrt{\beta_t} \sigma_{t-1}$\;
            $U(c_t,a)\!\gets\![\widehat{\alpha}(c_t,a)]^{\gamma} \cdot E(a)^{1-\gamma}$\;
        }
        $a_t \gets \arg\max_{a\in\mathcal{A}} U(c_t,a)$ \;
    }

    \textbf{Execute} $a_t$, observe accuracy $\alpha_t$ and metrics $l_t$, $r_t$\;
    \textbf{Update} $W \leftarrow \{(c_\tau, a_\tau, \alpha_\tau, l_\tau, r_\tau)\}_{\tau = \max(1, t - W_{\max} + 1)}^{t}$\;
    \textbf{Update} $\mathrm{GP}$ posterior using data in $W$\;

    \If{$t > T_0$ \textbf{and} $t \pmod K \equiv 0$}{
        Compute means $\bar{\alpha}, \bar{L}, \bar{R}$ from window $W$\;

        \lIf{$\bar{\alpha} < \alpha_C$}{$\gamma \leftarrow \min(1, \gamma + \delta_{\gamma})$}
        \lElse{$\gamma \leftarrow \max(0, \gamma - \delta_{\gamma})$}

        \If{$\bar{L} > L_{\lim}$}{
            $w_r \leftarrow \min(1, w_r + \delta_w)$; $w_m \leftarrow 1 - w_r$
        }
        \ElseIf{$\bar{R} > R_{\lim}$}{
            $w_m \leftarrow \min(1, w_m + \delta_w)$; $w_r \leftarrow 1 - w_m$
        }
        \textbf{Update} $E(a)$ via Eq.~\ref{eq:1} with new $w_m, w_r$\;
    }
}
\end{algorithm}

\subsection{Algorithmic Framework}
\label{section:Algorithmic}
Initially, we explored linear-contextual bandit methods such as LinUCB, Hybrid UCB~\cite{li2010contextual}, and LinkUCBGlobal~\cite{cesa2013gang}, but these showed limited performance, converging to accuracies well below $\alpha_{\text{C}}$ due to non-linear relationships between contexts and actions. To address this, we adopt a Bayesian online learning approach, modeling the utility function with a Gaussian Process (GP)~\cite{williams2006gaussian} over the joint context-action space. As non-parametric estimators, GPs capture non-linear dynamics and provide uncertainty estimates, enabling a balance between exploration and exploitation.

\textbf{Gaussian process and kernel design.} We use Gaussian Processes as non-parametric approximators to estimate the accuracy component of the utility function, modeling each context-action pair $x \in \mathcal{X} = \mathcal{C} \times \mathcal{A}$ as a sample from $\mathcal{GP}(\mu(x), k(x, x'))$, where $\mu(x)$ is the mean and $k(x, x')$ is the kernel. The predictive uncertainty $\sigma(x)$, given by the kernel, reflects the variance of the GP prediction for each context-action pair. Based on observed non-linearities and varying smoothness in accuracy across different contexts and actions (Section~\ref{subsection:Data-Driven Analysis}), we adopt a stationary, anisotropic Matérn kernel (smoothness parameter 2.5), which fits the smooth yet non-linear nature of the utility function~\cite{srinivas2009gaussian}. To ensure stability, kernel hyperparameters are fixed during execution, selected through robustness testing on prior data~\cite{bull2011convergence}. Mathematical details are given in~\cite{williams2006gaussian}.

\textbf{Exploration and exploitation.} As delineated in Algorithm~\ref{alg:defrag-bol}, DEFRAG-BOL operates in two phases: exploration and exploitation.

In the \textit{exploration phase} ($t=1, \dots, T_0$), the system retrieves the context $c_t$ for each query and uniformly samples an action $a_t$ from the action space, executing it without utility prediction. The observed accuracy $\alpha(c_t, a_t)$ is recorded to update the GP posterior. Uniform sampling ensures unbiased exploration and avoids early exclusion of potentially optimal actions, which could hinder convergence~\cite{schaul2015prioritized}.

In the \textit{exploitation phase} ($t = T_0+1, \dots, T$), the system uses the GP posterior to predict the accuracy for each candidate action. Specifically, it computes the mean $\mu_{t-1}(c_t, a)$ and uncertainty $\sigma_{t-1}(c_t, a)$ to formulate an Upper Confidence Bound (UCB) estimate $\widehat{\alpha}(c_t,a)$. This accuracy estimate is integrated with the pre-computed efficiency $E(a)$ to derive a composite utility score $U(c_t, a) = [\widehat{\alpha}(c_t,a)]^{\gamma} \cdot E(a)^{1-\gamma}$. The optimizer selects the action maximizing this utility for execution. Upon observing the actual accuracy $\alpha_t$ and efficiency metrics ($l_t, r_t$), the history $W$ is updated as a sliding window retaining the most recent $W_{\max}$ samples. Its observed accuracy is then used to update the GP, allowing the model to adapt as new data become available.

To ensure long-term performance, the algorithm employs a \textit{periodic adaptation mechanism}. Every $K$ rounds, statistics $(\bar{\alpha}, \bar{L}, \bar{R})$ are computed over a window $W$ to recalibrate the weighting parameters. The accuracy weight $\gamma$ is increased when the moving average $\bar{\alpha}$ falls below the target $\alpha_C$, prioritizing accuracy, and decreased otherwise to improve efficiency. Meanwhile, the efficiency weights $w_r$ and $w_m$ are adjusted to improve resource utilization efficiency: $w_r$ is increased if the average input length $\bar{L}$ exceeds $L_{\lim}$, penalizing long contexts, while $w_m$ is raised when model-cost usage $\bar{R}$ exceeds $R_{\lim}$, discouraging high-cost model configurations. The efficiency table $E(a)$ is then recomputed to reflect these updated priorities in subsequent decisions.

\textbf{Practical issue and acceleration.}
A key limitation of Gaussian Processes is the $\mathcal{O}(N^3)$ time complexity for posterior computation, making real-time fitting impractical as $N$ grows~\cite{williams2006gaussian}, where $N$ is the number of historical data used for training. To improve efficiency, we adopt two strategies: (1) fit the GP model periodically instead of after every query, using fast $\mathcal{O}(N)$ prediction in between; and (2) restrict GP fitting to a fixed-size sliding window of recent queries, which also mitigates overfitting to outdated data. Together, these strategies reduce optimizer overhead while preserving the GP’s ability to model non-linear dependencies and maintain accuracy.

\subsection{Regret Bound Analysis}
\noindent Given Algorithm~\ref{alg:defrag-bol} and the short-term objective in Eq.~\ref{eq:2}, we derive the following theoretical results.

\begin{theorem}
Let $\delta\in(0,1)$, $\gamma\in(0,1)$, and define $\beta_t=2\log{(6|\mathcal{A}|/(\delta\pi^2t^2))}$. Assume the utility function over context-action pairs is sampled from a Gaussian Process with zero mean and Matérn kernel $k(\cdot,\cdot)$. Then, with probability at least $1-\delta$, the cumulative regret $R_T$ of Algorithm~\ref{alg:defrag-bol} after $T$ rounds satisfies
\[
R_T \leq \sqrt{ \frac{4E_{min}^{3-\gamma}\beta_T \lambda_I T }{ \log(1+\sigma^{-2}) } },
\]
where $E_{min} = \min_a E(a)$, $\lambda_I = \max_{c_t, a} I(\alpha(c_t,a);\hat{\alpha}(c_t,a))$ is the maximal mutual information gain, and $\sigma^2$ is the observation noise variance. In particular, for fixed hyperparameters, we have $R_T = O(\sqrt{T\lambda_I})$.
\end{theorem}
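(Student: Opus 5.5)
The plan follows the GP-UCB analysis of Srinivas et al.~\cite{srinivas2009gaussian}, with one extra step that transfers the accuracy-level confidence bounds through the weighted geometric mean in Eq.~\ref{eq:2}.

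\textbf{Confidence bound.} Define the instantaneous regret as $r_t = U^\star(c_t) - U(c_t,a_t)$, where $U(c,a)=\alpha(c,a)^\gamma E(a)^{1-\gamma}$ and $U^\star(c_t)=\max_a U(c_t,a)$. Conditioned on the history, $\alpha(c_t,a)\sim\mathcal N(\mu_{t-1},\sigma_{t-1}^2)$. The Gaussian tail bound and a union bound over $a\in\mathcal A$ and over $t$ (using $\sum_t 1/t^2=\pi^2/6$) will give
\[
|\alpha(c_t,a)-\mu_{t-1}(c_t,a)|\le\sqrt{\beta_t}\,\sigma_{t-1}(c_t,a)\quad\text{for all } t,a,
\]
with probability at least $1-\delta$. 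This is the choice of $\beta_t$ in the statement, up to its sign convention, which I will read as $2\log(|\mathcal A|\pi^2t^2/(6\delta))$. On this event the UCB satisfies $\widehat\alpha\ge\alpha$, since $x\mapsto x^\gamma$ is increasing. Because $a_t$ maximizes $\widehat U$, this yields
\[
r_t\le \widehat U(c_t,a_t)-U(c_t,a_t)=E(a_t)^{1-\gamma}\bigl(\widehat\alpha^\gamma-\alpha^\gamma\bigr).
\]

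\textbf{Linearization.} This is the step I expect to be the main obstacle. The map $x^\gamma$ is concave, so the mean value theorem gives $\widehat\alpha^\gamma-\alpha^\gamma\le\gamma\xi^{\gamma-1}(\widehat\alpha-\alpha)$, and the factor $\xi^{\gamma-1}$ blows up as $\xi\to0$. I would first try to bound accuracies and efficiencies away from zero, taking a floor tied to $E_{\min}$, so that the prefactor $\gamma E(a_t)^{1-\gamma}\xi^{\gamma-1}$ becomes a constant $C$. This constant will later be written as a power of $E_{\min}$; the exponent $3-\gamma$ in the statement presumably comes from squaring such a constant and absorbing $E(a_t)\le1$. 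If no floor is available, the fallback is to add a small constant inside the geometric mean. Either way we obtain
\[
r_t\le 2C\sqrt{\beta_t}\,\sigma_{t-1}(c_t,a_t).
\]

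\textbf{Summing.} By Cauchy--Schwarz and monotonicity of $\beta_t$,
\[
R_T^2\le T\sum_t r_t^2\le 4C^2\beta_T T\sum_t\sigma_{t-1}^2 .
\]
Next use $s\le \log(1+\sigma^{-2}s)/\log(1+\sigma^{-2})$ for $s\in[0,1]$, which applies because $\sigma_{t-1}^2\le k(x,x)\le 1$. This gives
\[
\sum_t\sigma_{t-1}^2\le\frac{2}{\log(1+\sigma^{-2})}\cdot\frac12\sum_t\log\bigl(1+\sigma^{-2}\sigma_{t-1}^2\bigr)=\frac{2I_T}{\log(1+\sigma^{-2})},
\]
where $I_T$ is the information gain of the observations (Srinivas et al., Lemma 5.3). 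Bounding $I_T$ by $\lambda_I$, the maximal mutual information gain, and collecting constants yields the stated bound. The sliding window and the periodic refits only increase the posterior variance relative to full conditioning, so I would argue that these sums can be controlled the same way; this is a second delicate point to check.

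\textbf{Rate.} With $\gamma$, $E_{\min}$, $\sigma$ and $\delta$ fixed, $\beta_T$ grows only logarithmically in $T$ (and is treated as a constant here), which gives $R_T=O(\sqrt{T\lambda_I})$.
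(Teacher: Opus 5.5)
Your skeleton is the paper's: a Gaussian tail bound plus a union bound for the confidence event, optimism from the argmax, a linearization of $x\mapsto x^\gamma$ to get $r_t\le 2C\sqrt{\beta_t}\,\sigma_{t-1}$, then Cauchy--Schwarz and the information-gain identity. You are right on two points where the paper slips.

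First, its $\beta_t$ has the fraction inverted: then $e^{-\beta_t/2}=\delta\pi^2t^2/(6|\mathcal A|)$, the union bound diverges, and $\beta_t<0$ for large $t$. Second, the linearization is indeed the crux. The paper gets past it with ``$a^\gamma-b^\gamma<a-b$ for $a>b>0$,'' which fails for small arguments ($a=10^{-2}$, $b=10^{-4}$, $\gamma=\tfrac12$ give $0.09>0.0099$). It holds for all $a>b$ exactly when $b\ge\gamma^{1/(1-\gamma)}$, i.e.\ under a floor of your kind.

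Still, the plan does not reach the theorem. The floor is an extra hypothesis, and tying it to $E_{\min}$ has no basis. Efficiency says nothing about accuracy, and under the stated GP model $\alpha(c,a)$ is Gaussian, so it is neither bounded away from $0$ nor even nonnegative. You must assume $\alpha\in[\underline\alpha,1]$ outright. Adding a constant inside the geometric mean analyzes a different utility from the one Algorithm~\ref{alg:defrag-bol} maximizes. With only $\alpha\ge0$, the valid step is $\widehat\alpha^\gamma-\alpha^\gamma\le(\widehat\alpha-\alpha)^\gamma$. H\"older then yields only $O\bigl(T^{1-\gamma/2}(\beta_T\lambda_I)^{\gamma/2}\bigr)$.

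Moreover, ``collecting constants yields the stated bound'' fails. Your chain gives $R_T\le\sqrt{8C^2\beta_T\lambda_IT/\log(1+\sigma^{-2})}$ with $C=\gamma\underline\alpha^{\gamma-1}\max_aE(a)^{1-\gamma}$; the $8$ comes from $\sum_t\log(1+\sigma^{-2}\sigma_{t-1}^2)=2I_T$.
\begin{itemize}
\item $E_{\min}$ cannot enter as an upper-bound factor, since $E(a_t)^{1-\gamma}\ge E_{\min}^{1-\gamma}$.
\item Squaring gives exponent $2-2\gamma$, not $3-\gamma$.
\item For $E_{\min}<1$, $E_{\min}^{3-\gamma}\le E_{\min}^{2-2\gamma}$, so ``absorbing'' runs the wrong way.
\end{itemize}
The paper reaches the displayed form only by bounding $E(a)^{1-\gamma}$ above by $E_{\min}^{1-\gamma}$, by that exponent slip, and by dropping the factor $2$.

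The displayed inequality in fact cannot hold in general. Since $\sigma_{t-1}^2\le1$ gives $\lambda_I\le\tfrac{T}{2}\log(1+\sigma^{-2})$, its right-hand side is at most $T\sqrt{2E_{\min}^{3-\gamma}\beta_T}$. This tends to $0$ if you add one action of negligible efficiency that is never worth selecting, while the regret does not drop. What your route can honestly deliver is the rate $O(\sqrt{T\beta_T\lambda_I})$ under an explicit floor.

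Your sliding-window remark also points the wrong way. The identity $\tfrac12\sum_t\log(1+\sigma^{-2}\sigma_{t-1}^2)=I(y_{1:T};f)$ is the chain rule for mutual information. It requires $\sigma_{t-1}$ to be the posterior standard deviation given all past observations. Windowed variances, or stale ones under periodic refits, are larger, so the sum is larger and is not controlled by $I_T$. Splitting time into blocks of length $W_{\max}$ bounds it only by $\lceil T/W_{\max}\rceil$ times the information gain of $W_{\max}$ points. For fixed $W_{\max}$ that makes the regret bound linear in $T$. Even the confidence event needs rechecking, because the window's points were chosen using discarded data.

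So even with a floor, your argument covers only full-history GP-UCB ($W_{\max}\ge T$) with fixed $\gamma,w_m,w_r$. Two features of Algorithm~\ref{alg:defrag-bol} lie outside both your analysis and the paper's: the $T_0$ uniform rounds, which add $O(T_0)$ regret, and the online updates of $\gamma,w_m,w_r$, which make $U$ and the comparator $a^\star$ time-varying. Both must be excluded from the claim or handled separately.
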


\begin{proof}
If a random variable $x\sim \mathcal{N}(0,1)$, then for $\theta>0$
\begin{align}
\Pr\{x>\theta\}&=1/\sqrt{2\pi}\int_\theta^{\infty}e^{-x^2/2}\\
&=1/\sqrt{2\pi}\int_0^{\infty}e^{-(x-\theta)^2/2-(x-\theta)\theta-\theta^2/2}\\
&\le e^{-\theta^2/2}\Pr\{x>0\}=e^{-\theta^2/2}/2.
\end{align}
Let $x=(\hat{\alpha}(c_t,a)-\mu_{t-1}(c_t,a))/\sigma_{t-1}(c_t,a)$ and $\theta=\sqrt{\beta_t}$, then it holds that
\begin{align}
\Pr\{|\hat{\alpha}(c_t,a)-\mu_{t-1}(c_t,a)|>\sqrt{\beta_t}\sigma_{t-1}(c_t,a)\}\le e^{-\beta_t/2}.\notag
\end{align} Apply union bound on both $a\in\mathcal{A}$ and $t\in T$ and set $\beta_t=2\log{(6|\mathcal{A}|/(\delta\pi^2t^2))}$. Then,
\begin{align}
|\hat{\alpha}(c_t,a)-\mu_{t-1}(c_t,a)|\le\sqrt{\beta_t}\sigma_{t-1}(c_t,a), \forall a, \forall t
\label{eq:9}
\end{align}holds with probability at least $1-\delta$.

According to line 17 in Algorithm~\ref{alg:defrag-bol}, it holds that
\begin{align}
&[\mu_{t-1}(c_t,a)+\sqrt{\beta_t}\sigma_{t-1}(c_t,a)]^\gamma[E(a)]^{1-\gamma}\\
\ge& [\mu_{t-1}(c_t,a^*)+\sqrt{\beta_t}\sigma_{t-1}(c_t,a^*)]^\gamma[E(a^*)]^{1-\gamma}\\
\ge& [\hat{\alpha}(c_t,a^*)]^\gamma[E(a^*)]^{1-\gamma}=U(c_t,a^*).
\end{align}where $E_{min}=\min_a\{E(a)\}$. The second inequality is by Eq.~\ref{eq:9}. Then, the regret $R_t$ is
\begin{align}
&R_t=U(c_t,a^*)-U(c_t,a)\\\notag
&=[\hat{\alpha}(c_t,a^*)]^\gamma[E(a^*)]^{1-\gamma}-[\hat{\alpha}(c_t,a)]^\gamma[E(a)]^{1-\gamma}\\ \notag
&\le ([\mu_{t-1}(c_t,a)+\sqrt{\beta_t}\sigma_{t-1}(c_t,a)]^\gamma-[\hat{\alpha}(c_t,a)]^\gamma)[E(a)]^{1-\gamma}\\ \notag
&\le (\mu_{t-1}(c_t,a)+\sqrt{\beta_t}\sigma_{t-1}(c_t,a)-\hat{\alpha}(c_t,a))[E(a)]^{1-\gamma}\\ \notag
&\le 2E_{min}^{1-\gamma}\sqrt{\beta_t}\sigma_{t-1}(c_t,a),
\end{align}
The second inequality is by the fact that $a^\gamma-b^\gamma<a-b$ when $\gamma\in(0,1)$ and $a>b>0$. Obviously, $\mu_{t-1}(c_t,a)+\beta_t\sigma_{t-1}(c_t,a)>\hat{\alpha}(c_t,a)$ holds by Eq.~\ref{eq:9}. The last inequality is also by Eq.~\ref{eq:9}. Since $\beta_t$ is a non-decreasing variable with $t$, it holds that
\begin{align}
&R_t^2=4E_{min}^{3-\gamma}{\beta_t}\sigma_{t-1}^2(c_t,a)\le4E_{min}^{3-\gamma}{\beta_T}\sigma^2\sigma^{-2}\sigma_{t-1}^2(c_t,a)\notag\\
&\le \frac{4E_{min}^{3-\gamma}{\beta_T}}{\log{(1+\sigma^{-2})}}\log(1+\sigma^{-2}\sigma_{t-1}^2(c_t,a)).
\end{align}
The last inequality holds by $\sigma^{-2}\sigma_{t-1}^2(c_t,a)\log(1+\sigma^{-2})\le \sigma^{-2}\log(1+\sigma^{-2}\sigma_{t-1}^2(c_t,a))$. Due to the information theory~\cite{cover1999elements}, the information gain about $\hat{\alpha}(c_t,a)$ for the decision $a\in\mathcal{A}$ can be measured by the mutual information $I({\alpha}(c_t,a);\hat{\alpha}(c_t,a))$ between $\hat{\alpha}(c_t,a)$ and the actual observation ${\alpha}(c_t,a)$. Assuming ${\alpha}(c_t,a)-\hat{\alpha}(c_t,a)\sim\mathcal{N}(0,\sigma^2)$, it holds by induction that
\begin{align}
\hspace{-2pt}I({\alpha}(c_t,a);\hat{\alpha}(c_t,a))=\frac{1}{2}\sum_{t=1}^T\log(1+\sigma^{-2}\sigma_{t-1}^2(c_t,a)).
\end{align}
Further, by the Cauchy-Schwarz inequality, we have
\begin{align}
\sum\nolimits_{t=1}^TR_t\le\sqrt{T\sum\nolimits_{t=1}^TR_t^2}\le\sqrt{\frac{4E_{min}^{3-\gamma}{\beta_T}\lambda_IT}{\log{(1+\sigma^{-2})}}},
\end{align}
where $\lambda_I=\max_{c_t,a}{I({\alpha}(c_t,a);\hat{\alpha}(c_t,a))}$ and is upper bounded by $O(T^{d(d+1)/(2v+d(d+1))}\log{T})$ with Matérn kernals according to~\cite{srinivas2009gaussian}. The cumulative regret of Algorithm~\ref{alg:defrag-bol} is upper bounded by $O(\sqrt{T\lambda_I})$, which completes the proof.
\end{proof}

\section{Experimental Evaluation}
\label{section:evaluation}
\subsection{Experimental Setup}

\begin{figure}[t]
\centering
\includegraphics[width=0.9\columnwidth]{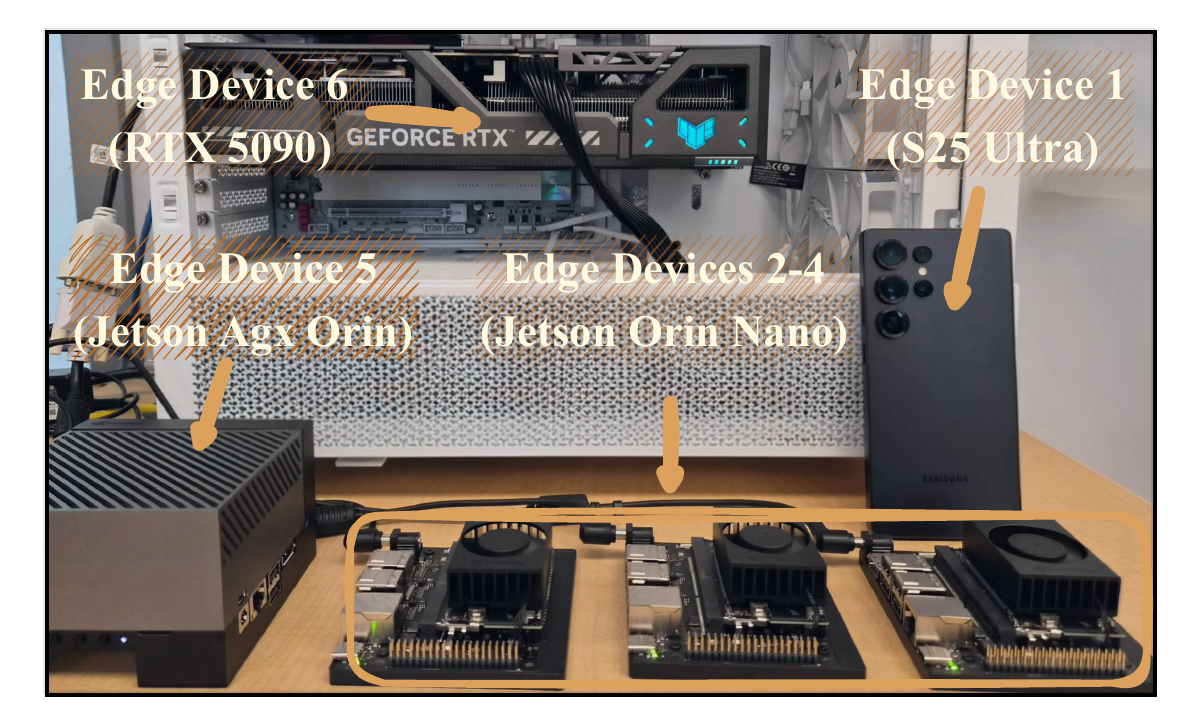}
\caption{DEFRAG prototype testbed.}
\label{fig:prototype_design}
\end{figure}

Our prototype testbed, shown in Figure~\ref{fig:prototype_design}, comprises six heterogeneous edge devices to emulate realistic edge deployment scenarios. Edge Device 1 is a Samsung S25 Ultra smartphone, representing mobile clients with limited resources. Edge Devices 2 to 4 are NVIDIA Jetson Orin Nano boards, emulating lightweight inference nodes such as laptops or AR glasses. Edge Device 5 is an NVIDIA Jetson AGX Orin, representing higher-capacity platforms typical of autonomous robotics. Edge Device 6 is a workstation with an NVIDIA RTX 5090 GPU, emulating a cellular base station or edge server.

All devices are connected via Wi-Fi, allowing evaluation of network bottlenecks and peer-to-peer communication. In DEFRAG, each device acts both as server and client: as a server, it hosts a local vector database and provides SLM inference services via Flask-based RESTful APIs~\cite{flask-docs} and Ollama APIs~\cite{ollama-api-docs}; as a client, it accesses data and computational resources from peer devices. To simulate user-specific knowledge preferences and non-uniform data distribution, we partition Wikipedia pages from each of the NQ and HotpotQA datasets into ten balanced clusters using TF-IDF vectors and a balanced K-means algorithm. Each of the five peer devices stores two clusters, while Device 6 stores all clusters and serves as a fallback for retrieval.

The action space is defined by $|\mathcal{M}|=12$ and $|\mathcal{E}|=|\mathcal{R}|=|\mathcal{D}|=11$, yielding $|\mathcal{X}|=12 \times 11^3 \approx 1.6 \times 10^4$ possible actions. Given the additional variability from different query contexts, such complexity underscores the necessity of a data-efficient learning mechanism. Initial efficiency weights $w_m$ and $w_r$ are set to $0.5$, as there is no strong prior evidence favoring either model size or input length in terms of cost efficiency. Given that accuracy is the primary long-term objective, we initialize the accuracy weight $\gamma$ to $0.8$. These weights are updated every 20 queries based on recent performance. The Gaussian Process model is also updated every 20 rounds, using up to the latest 1500 samples to balance computational overhead and prediction accuracy. Unless specified, lines show cumulative averages up to each round, and shaded areas indicate the difference between cumulative and moving averages over the most recent 500 rounds.

\subsection{Experimental Result}

We evaluate DEFRAG and all baseline methods on the heterogeneous prototype testbed. For the single-hop NQ dataset, baseline SLMs and LLMs are evaluated with standard RAG; for the multi-hop HotpotQA dataset, GraphRAG is used to enable relational reasoning, as explained in Section~\ref{subsection:Data-Driven Analysis}. Each method is assessed on 500 QA pairs using a well-trained optimizer (with analysis provided in the following section). Evaluation metrics include answer accuracy, total cost (covering inference and network overhead) to reflect cost efficiency, and system scalability.

\begin{figure}[t]
\centering
\includegraphics[width=\linewidth]
{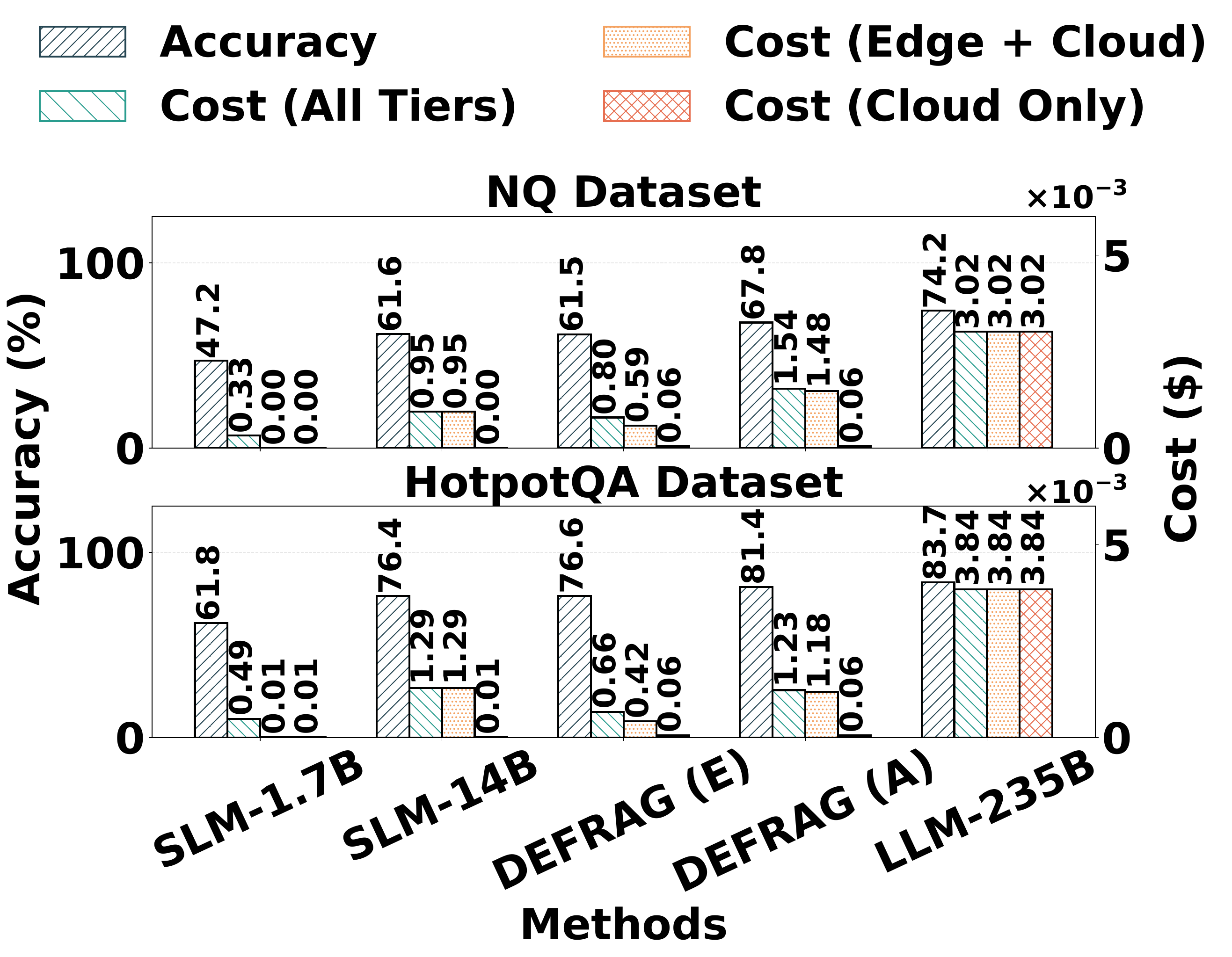}
\caption{Comparison of accuracy and cost between DEFRAG and centralized SLM/LLM baselines augmented with RAG. The top panel presents results on the NQ dataset using standard RAG, while the bottom panel shows results on the HotpotQA dataset using GraphRAG.}
\label{fig:performance_comparison}
\end{figure}

\begin{figure}[t]
\centering
\includegraphics[width=\linewidth]{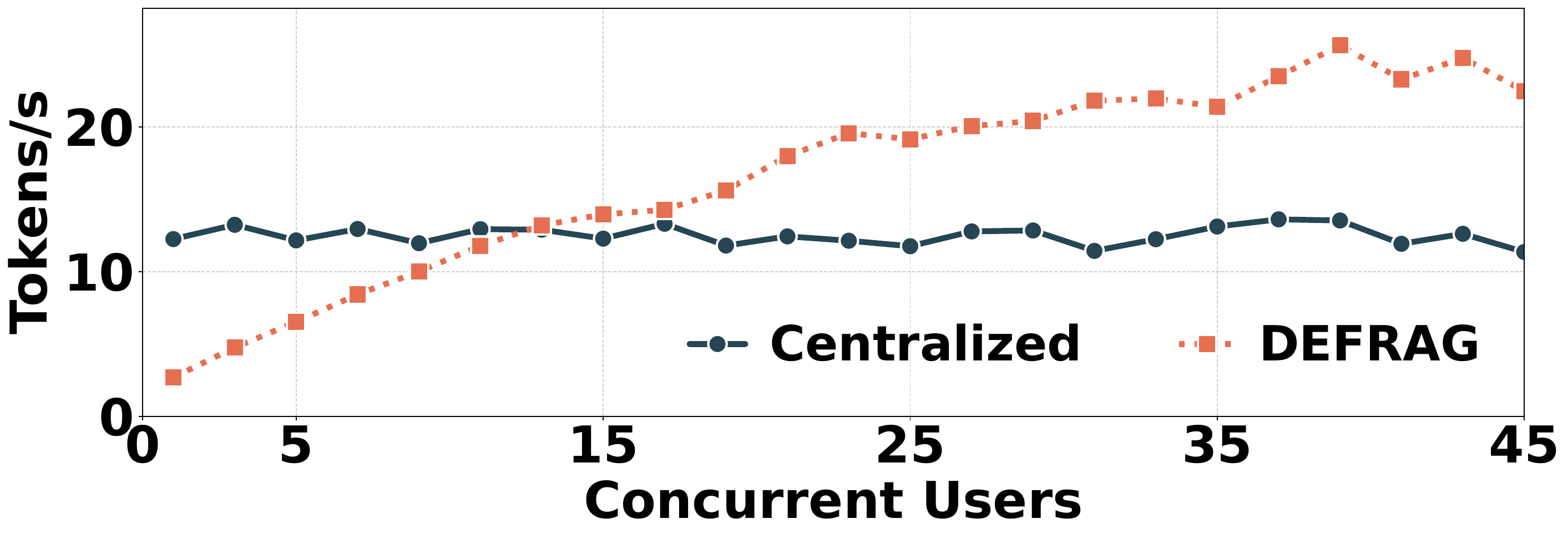}
\caption{Comparison of average throughput (tokens/s) under varying numbers of concurrent users for centralized and our DEFRAG (decentralized) system architecture.}
\label{fig:scalability_comparison}
\end{figure}

Figure~\ref{fig:performance_comparison} compares DEFRAG with static and centralized SLM and LLM baselines on both datasets, evaluating both accuracy and total cost. Baselines use SLMs and LLM with corresponding RAG parameters, and accuracy is measured accordingly. Total cost includes model inference (calculated using official Qwen API pricing\footnote{\url{https://www.alibabacloud.com/help/en/model-studio/models}}) and network overhead (estimated from AWS public data transfer rates\footnote{\url{https://aws.amazon.com/ec2/pricing/on-demand/}}). For SLM-1.7B and SLM-14B baselines, inference is performed locally on the edge server, with costs including computation and user-to-device data transfer. The LLM-235B baseline is cloud-hosted, with similar cost components. For DEFRAG, costs include computation, user-to-device data transfer, and network transfer for coordination and peer data exchange.

\begin{figure*}[t]
\centering
\includegraphics[width=1\linewidth]{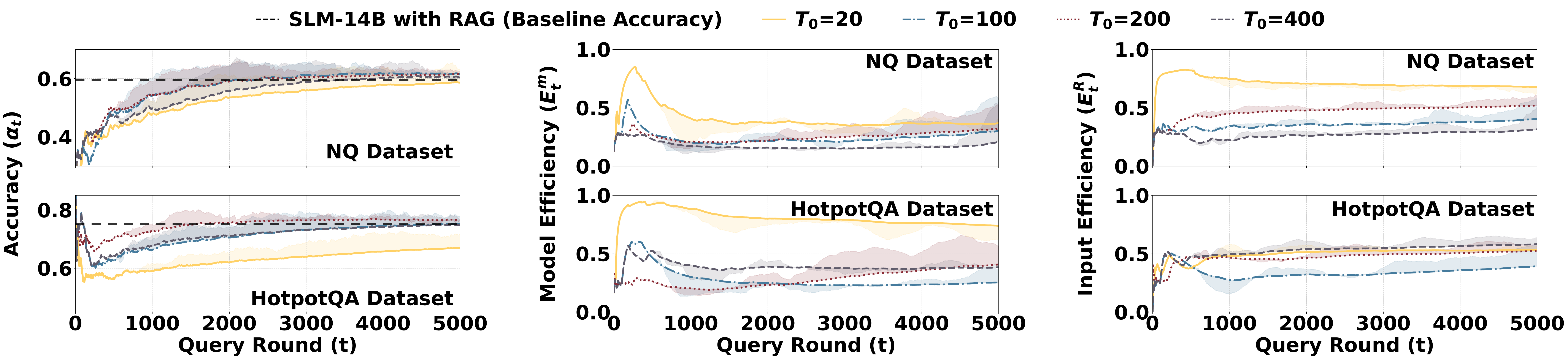}
\caption{Comparison of DEFRAG accuracy and efficiency across initial exploration steps ($T_0$) on NQ and HotpotQA datasets.}
\label{fig:T_0_comparision}
\end{figure*}

Experimental results show that under the accuracy-oriented configuration (DEFRAG (A), $\alpha_{\text{C}}=0.99$), the accuracy gap between edge-based SLMs and cloud-based LLMs is significantly reduced. On two benchmark QA datasets, edge accuracy improves from 61.5\% and 76.6\% (SLM-14B) to 67.8\% and 81.4\%, narrowing the gap with LLM-235B from 12.7\% and 7.1\% to 6.4\% and 2.3\%, respectively. For cost, DEFRAG achieves reductions of up to 98.4\% compared to LLM-235B when mutual resource sharing is enabled among peer devices. In cost-oriented mode, DEFRAG (E) maintains accuracy comparable to SLM-14B baseline, while reducing total cost by up to 48.8\% when all device models (1.7B to 235B) are included, and by 67.4\% when considering only the edge and cloud server models (14B and 235B).

As shown in Figure~\ref{fig:scalability_comparison}, DEFRAG demonstrates strong scalability as concurrency increases. While the centralized baseline on a single edge server quickly saturates and incurs queuing delays with more users, DEFRAG distributes retrieval and inference across heterogeneous devices, supporting near-linear throughput growth. In our experiments, DEFRAG achieves up to 97.8\% higher peak throughput than the centralized approach, with further gains as more edge devices join.

These results confirm that while SLMs still lag behind LLMs in accuracy with RAG, DEFRAG substantially narrows this gap, reduces costs, and improves scalability by leveraging heterogeneous edge resources to provide decentralized and democratized LLM services from cloud to crowd.

\subsection{Ablation Studies}
\label{subsection:ablation}

We first evaluate the effect of different initial exploration steps ($T_0$) on the convergence of DEFRAG in terms of accuracy and efficiency. Experiments are conducted over 5,000 query rounds (10 iterations per dataset). Due to dynamic adjustment of efficiency weights, we report model efficiency ($E_t^m$) and input efficiency ($E_t^R$), as shown in Figure~\ref{fig:T_0_comparision}.

Among the tested settings, model efficiency peaked at $T_0 = 20$ for both datasets, as the system favored smaller models, resulting in lower accuracy and limited convergence. This illustrates the risk of insufficient exploration, which may introduce bias and hinder convergence. At $T_0 = 100$, accuracy approached that of the cloud-scale SLM baseline (SLM-14B), though efficiency could still improve. For $T_0 \geq 200$, both accuracy and efficiency converged. After about 2,000 iterations, accuracy stabilized, while model selection gradually shifted toward smaller models to further improve efficiency. Ultimately, model efficiency converged with continued iterations, balancing accuracy and resource use. Based on these results, we select $T_0 = 200$ as the final setting.

Notably, NQ accuracy increased steadily, while HotpotQA showed greater fluctuations with varying exploration steps. This is because HotpotQA relies more on common-sense knowledge, allowing even partial relevant information to support correct answers~\cite{groeneveld2020simple}. In contrast, NQ queries often involve obscure facts outside pre-training data, making them more dependent on retrieval~\cite{jin2024long}, and resulting in a lower baseline accuracy than HotpotQA.

\begin{table}[!t]
    \centering
    \footnotesize
    \renewcommand{\arraystretch}{1.0}
    \caption{Comparison of average accuracy and efficiency across context features ($c$) on HotpotQA and NQ datasets.}
    \begin{tabularx}{\linewidth}{lXXX}
        \toprule
        \textbf{Feature set $c$} & $\overline{\alpha}$ (\%) & $\overline{E}^{\!m}$ & $\overline{E}^{\!R}$ \\
        \midrule
        \multicolumn{4}{c}{NQ Dataset} \\
        \midrule
        Mean+Std                  & 60.3 & 0.24 & 0.60 \\
        Mean+Std+Num            & 60.8 & 0.28 & 0.55 \\
        \textbf{Mean}+\textbf{Std}+\textbf{Num}+\textbf{Min}+\textbf{Max} &\textbf{ 61.5} & \textbf{0.32} & \textbf{0.52} \\
        \midrule
        \multicolumn{4}{c}{Hotpot Dataset} \\
        \midrule
        Mean+Std                  & 72.9 & 0.28  & 0.55  \\
        Mean+Std+Num            & 75.4 & 0.19  & 0.42  \\
        \textbf{Mean}+\textbf{Std}+\textbf{Num}+\textbf{Min}+\textbf{Max} & \textbf{76.6} & \textbf{0.41}  & \textbf{0.52}  \\
        \bottomrule
    \end{tabularx}
    \label{tab:context_comparison}
\end{table}

\begin{figure}[!t]
\centering
\includegraphics[width=\linewidth]{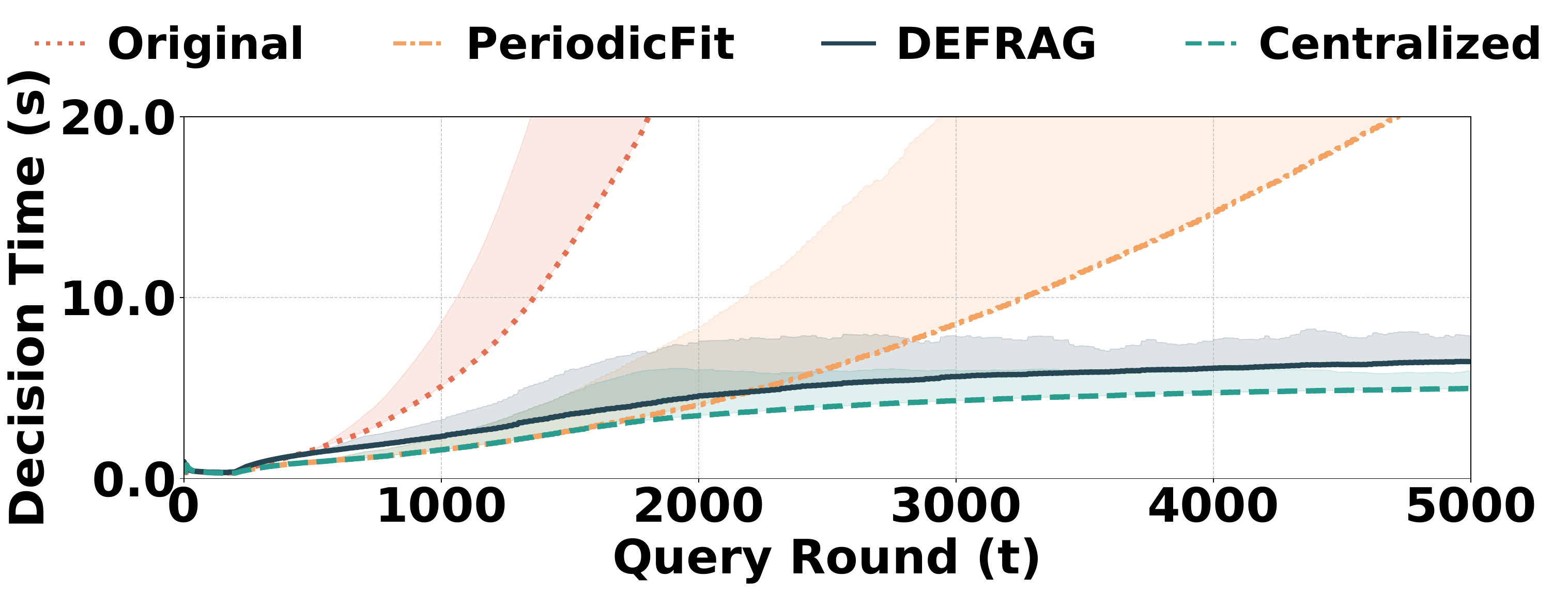}
\caption{Comparison of decision delay (retrieval plus optimizer prediction) across different optimization strategies.}
\label{fig:context_comparision}
\end{figure}

Table~\ref{tab:context_comparison} shows how different context feature sets affect average accuracy and efficiency on the NQ and HotpotQA datasets. By incorporating features that capture query structure and complexity, the optimizer can dynamically adjust its actions. Richer representations of relevance score distributions help the optimizer better interpret queries and select actions. While mean and standard deviation reflect central tendency and dispersion, they may not distinguish between a few strong matches and many moderate ones. Including count (num) indicates the density of relevant information, while adding min and max reveals the score range and the presence of outliers.

\begin{table*}[!t]
    \centering
    \footnotesize
    \renewcommand{\arraystretch}{1.0}
    \caption{HotpotQA 500-query system comparison.}
    \begin{tabular*}{\textwidth}{@{\extracolsep{\fill}}lcccccl@{}}
        \toprule
        \textbf{Setting} & \textbf{RAG parameters} & \textbf{Model policy} & \textbf{Accuracy} & \textbf{Latency} & \textbf{Norm. cost} & \textbf{Description} \\
        \midrule
        DEFRAG(A) & Adaptive & Edge adaptive & 81.4\% & 6.233s & 0.292 & Accuracy-focused setting \\
        DEFRAG(E) & Adaptive & Edge adaptive & 76.6\% & 7.490s & 0.158 & Efficiency-focused setting \\
        Fixed-CollabRAG & Fixed & Local 14B & 77.2\% & 3.322s & 0.312 & Fixed retrieval, local generation \\
        Adaptive Retrieval & Adaptive & Local 14B & 79.6\% & 3.057s & 0.305 & Adaptive RAG parameters only \\
        Adaptive Model & Fixed & Edge adaptive & 78.0\% & 6.102s & 0.295 & Adaptive model route only \\
        Heuristic-Fastest & Fixed & Rule-based & 60.8\% & 5.577s & 0.078 & Fastest feasible route \\
        Heuristic-Strongest & Fixed & Rule-based & 84.2\% & 6.355s & 1.000 & Strongest feasible route \\
        \bottomrule
    \end{tabular*}
    \label{tab:revision_task2_system_rows}
\end{table*}

Figure~\ref{fig:context_comparision} compares decision delay, including retrieval and optimizer prediction time, across different acceleration strategies. Without optimization (Original), decision delay is significantly higher, exhibiting $\mathcal{O}(N^3)$ complexity. Periodic GP fitting (PeriodicFit) mitigates delay to some extent but does not eliminate the exponential growth. Combining periodic GP fitting with window restriction (DEFRAG) substantially reduces decision delay, resulting in near-linear time complexity.

The centralized baseline (Centralized) measures delay on an edge server without network overhead. DEFRAG achieves a similar delay, despite additional network latency from distributed peer database access, indicating that network communication is not a bottleneck under stable testbed conditions.

Table~\ref{tab:revision_task2_system_rows} compares seven HotpotQA settings under the same prototype deployment. Each setting uses 500 queries and reports retrieval-plus-generation latency. Fixed-CollabRAG uses fixed RAG parameters and a fixed 14B model. Adaptive Retrieval changes only the RAG parameters. Adaptive Model changes only model selection. Both improve over Fixed-CollabRAG, but both remain below DEFRAG(A). This shows that RAG parameter selection and model selection are both needed. Compared with Fixed-CollabRAG, DEFRAG(A) raises accuracy from 77.2\% to 81.4\% and lowers normalized cost from 0.312 to 0.292. DEFRAG(E) keeps similar accuracy and lowers normalized cost to 0.158. Heuristic-Fastest and Heuristic-Strongest show the two single-objective choices: lower cost with lower accuracy, or higher accuracy with higher cost. Table~\ref{tab:revision_task2_system_rows} therefore shows how DEFRAG balances accuracy, latency, and cost.

\subsection{Robustness and Generalization Analysis}

We also test DEFRAG under four controlled settings that affect edge-collaborative RAG: model loading, mobile route instability, skewed data placement, and another QA domain. These tests examine whether DEFRAG remains effective when model execution, routes, data placement, or the task domain changes. The prototype setting is the reference, and each experiment changes one factor at a time.

\begin{figure}[t]
\centering
\includegraphics[width=\linewidth]{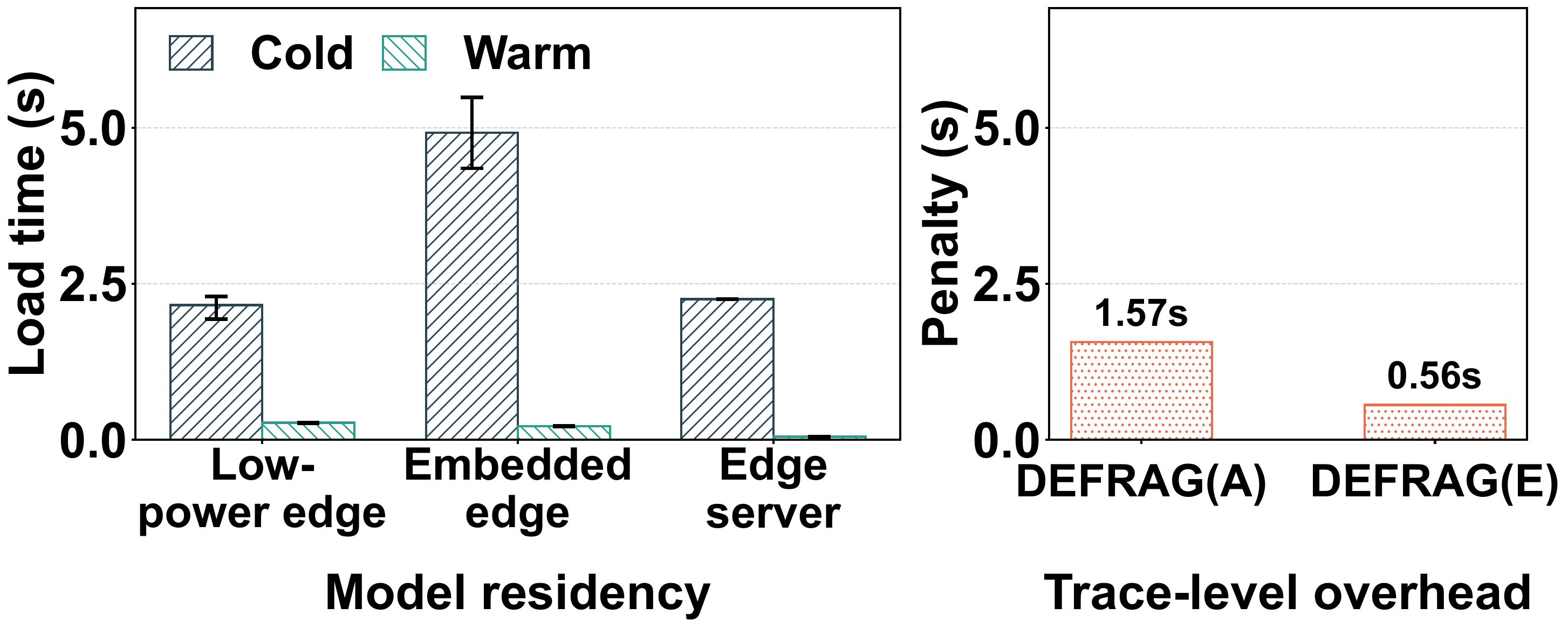}
\caption{Model residency and switching overhead across serving-peer routes.}
\label{fig:revision_model_overhead}
\end{figure}

\begin{figure*}[!t]
\centering
\begin{subfigure}[t]{0.33\textwidth}
    \centering
    \includegraphics[width=\linewidth]{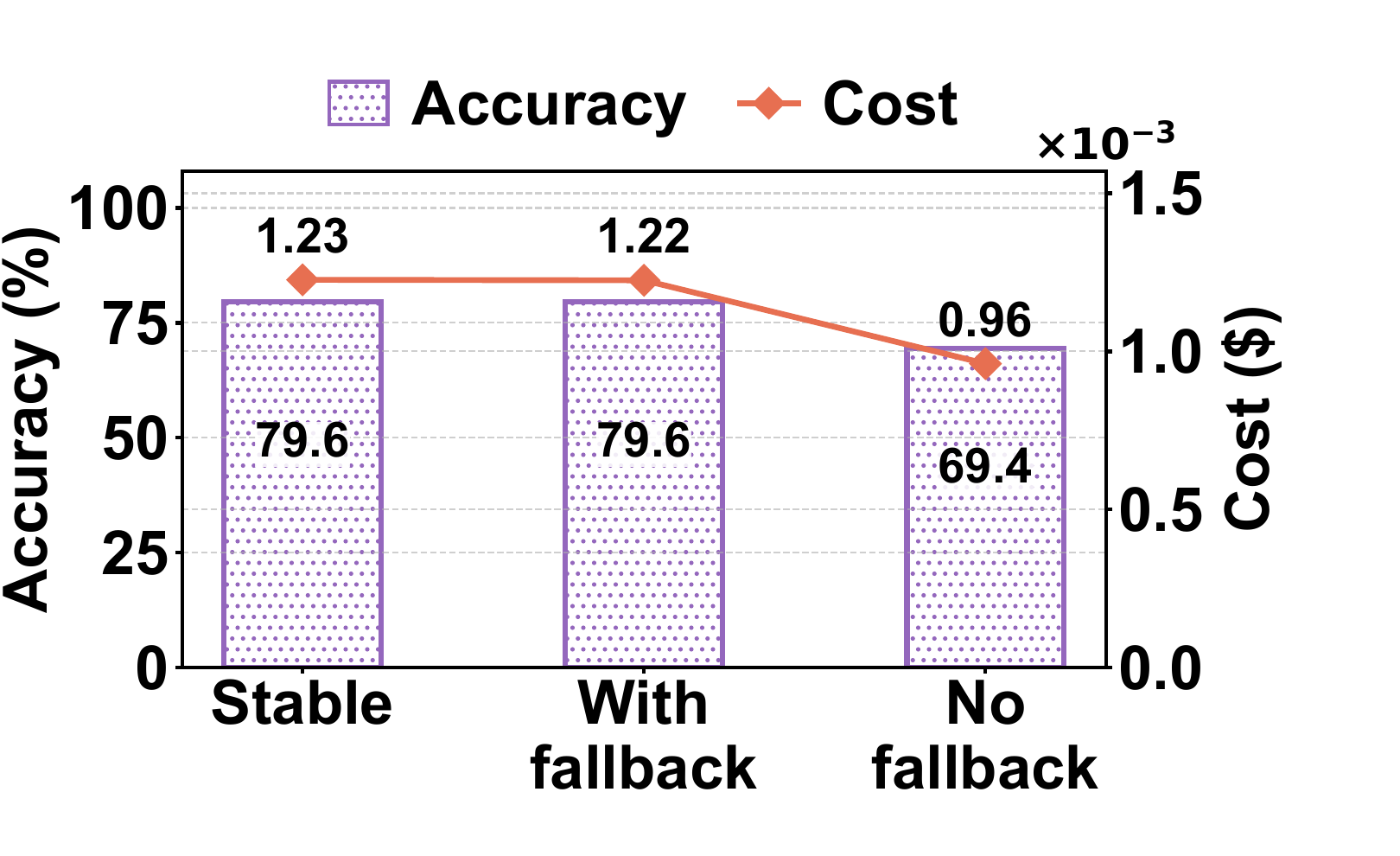}
\end{subfigure}\hfill
\begin{subfigure}[t]{0.33\textwidth}
    \centering
    \includegraphics[width=\linewidth]{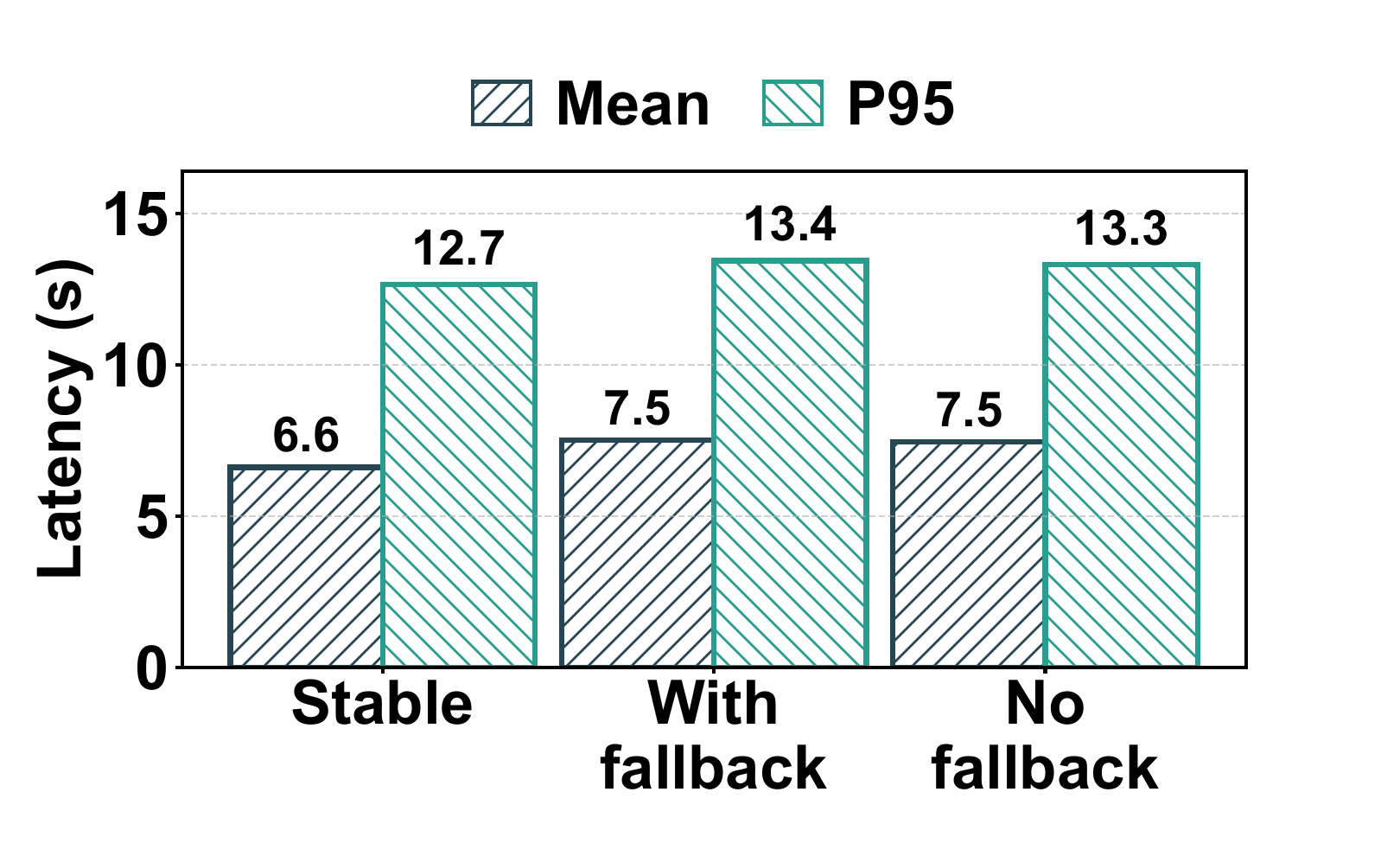}
\end{subfigure}\hfill
\begin{subfigure}[t]{0.33\textwidth}
    \centering
    \includegraphics[width=\linewidth]{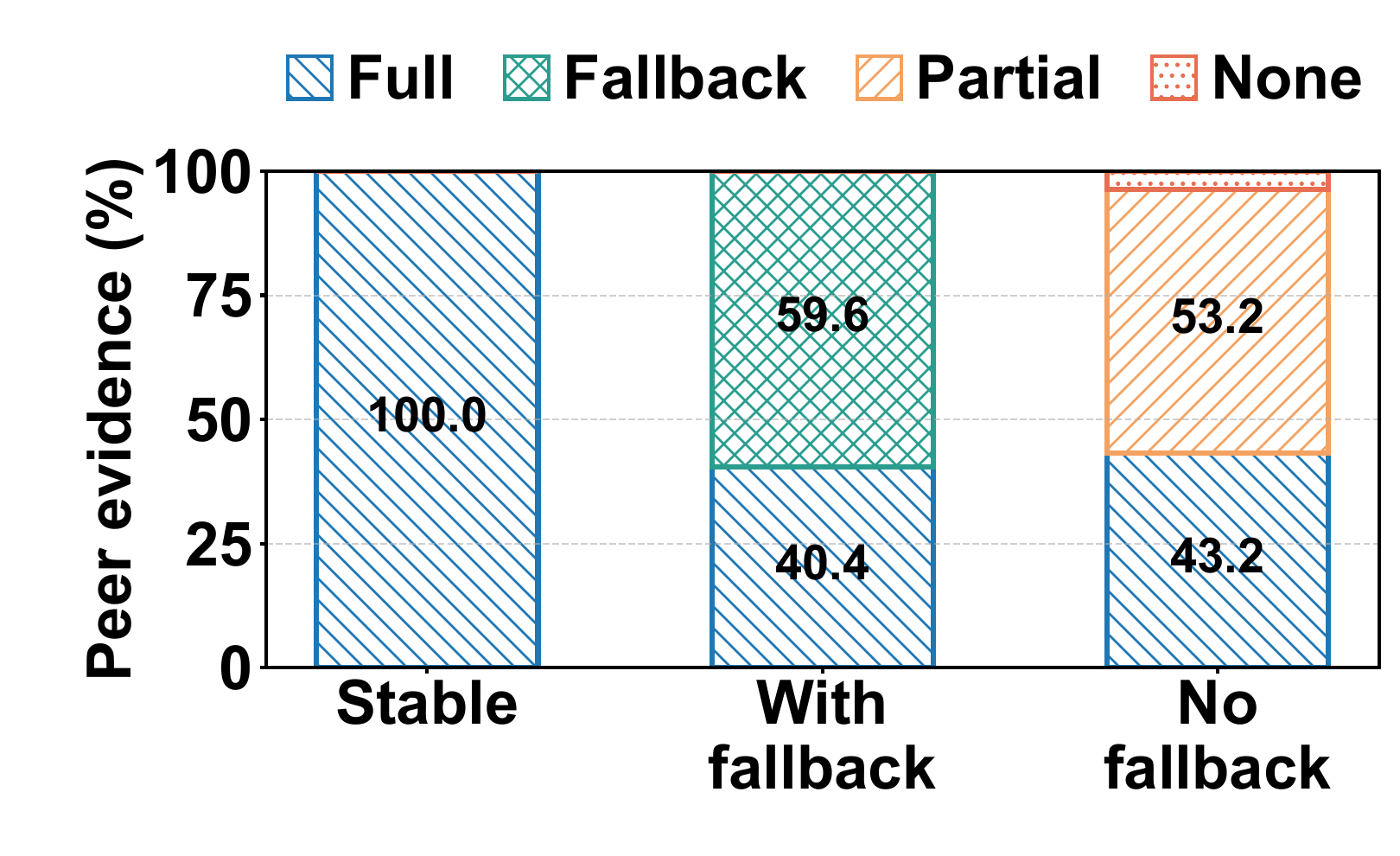}
\end{subfigure}
\caption{Paired mobile route stress on HotpotQA: accuracy/cost, latency, and evidence availability.}
\label{fig:revision_comm_churn}
\end{figure*}

\begin{figure}[t]
\centering
\includegraphics[width=0.9\columnwidth]{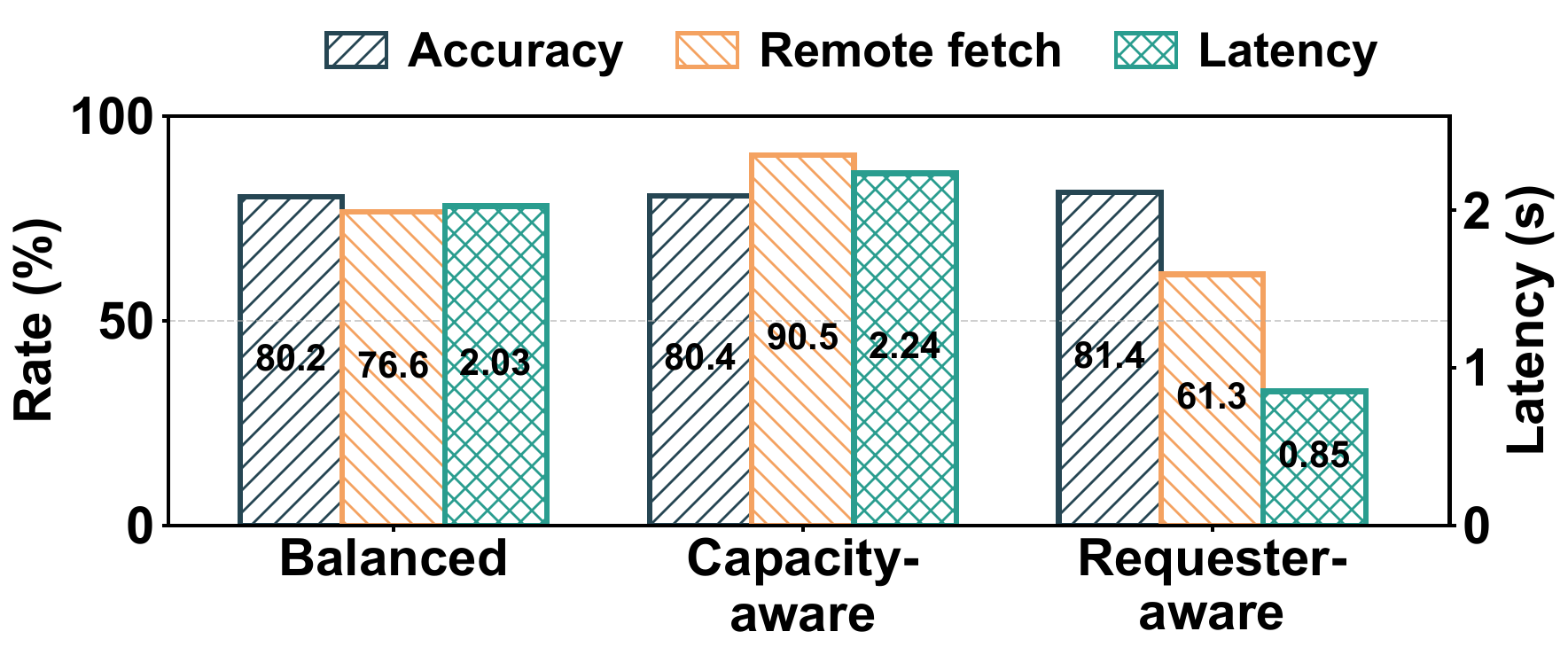}
\caption{Impact of skewed data placement on accuracy, remote fetch, and retrieval~latency.}
\label{fig:revision_skew}
\end{figure}

Figure~\ref{fig:revision_model_overhead} quantifies model loading and switching overhead in dynamic model selection. Figure~\ref{fig:revision_model_overhead}(a) profiles cold and warm execution with HotpotQA retrieval prompts on three route classes. Low-power edge routes represent lightweight peers such as Jetson Orin Nano boards and mobile terminals. Embedded edge routes represent higher-capacity embedded AI platforms, such as Jetson AGX Orin-class devices used in robotics. Edge-server routes represent workstation-class edge servers such as the RTX 5090 node. Cold execution loads a non-resident model, while warm execution reuses a resident or cached model. Cold loading takes 2.16\,s on low-power edge routes, 4.92\,s on embedded edge routes, and 2.25\,s on edge-server routes. Warm reuse stays below 0.28\,s in all three classes. This shows that model residency has a direct effect on action cost.

Figure~\ref{fig:revision_model_overhead}(b) replays the HotpotQA action traces of DEFRAG(A/E) with the measured loading and switching penalties. Loading and switching add 1.566\,s/query to DEFRAG(A), or 23.09\% of retrieval-plus-generation latency. They add 0.560\,s/query to DEFRAG(E), or 6.39\%. The lower penalty of DEFRAG(E) comes from more reuse of resident models, with a 91.4\% warm-resident ratio compared with 69.3\% for DEFRAG(A). This result shows that dynamic model selection has measurable I/O cost, and that efficiency-oriented selection can reduce this cost by reusing resident or cached models more often.

Figure~\ref{fig:revision_comm_churn} evaluates DEFRAG under mobile route instability. We use paired HotpotQA runs with the same queries, balanced placement, and DEFRAG(A) action sequence. The Stable run has no injected route loss. In the stress runs, the requester is a mobile device, and its local shard remains available. We inject a 30\% timeout/drop rate only on non-local peer data and model routes. In the fallback setting, the edge server serves as a redundant data and model route. In the no-fallback setting, the system skips missing peer evidence, and an unavailable model route is replaced by local 1.7B generation.

Figure~\ref{fig:revision_comm_churn}(a) reports accuracy and per-query cost. The fallback setting keeps the same 79.6\% accuracy as Stable, with similar cost (\$0.001225 vs. \$0.001226 per query). The no-fallback setting has lower cost, but the reduction comes from degraded service: missing peer evidence is skipped, local 1.7B generation is used more often, and accuracy drops to 69.4\%. Figure~\ref{fig:revision_comm_churn}(b) shows that fallback increases mean retrieval-plus-generation latency from 6.618\,s to 7.513\,s. This is the measured latency cost of route recovery.

Figure~\ref{fig:revision_comm_churn}(c) explains why accuracy is preserved. With fallback, all queries keep full peer evidence: 40.4\% through the original peer routes and 59.6\% through edge-server fallback. Without fallback, only 43.2\% of queries keep full peer evidence. Another 53.2\% keep partial peer evidence, and 3.6\% receive no peer evidence. In the same no-fallback run, 30.4\% of queries also use local 1.7B generation after a model-route failure. These results show that edge-server redundancy preserves evidence completeness and model-route continuity under mobile-like route stress.

Figure~\ref{fig:revision_skew} evaluates non-uniform data placement. We compare three placements. Balanced uses the original group assignment. Capacity-aware places frequently requested evidence groups on higher-capacity peers. Requester-aware places evidence groups closer to the requester origins that most often need them. All runs use the same HotpotQA queries and the same DEFRAG(A) selected actions. Thus, the comparison isolates the effect of data placement.

\begin{table}[!t]
    \centering
    \footnotesize
    \renewcommand{\arraystretch}{1.0}
    \setlength{\tabcolsep}{5pt}
\caption{Comparison of accuracy and normalized cost on the HarryPotter QA task.}
\begin{tabular*}{\columnwidth}{@{\extracolsep{\fill}}lcccc@{}}
    \toprule
        \textbf{Setting} & \textbf{RAG} & \textbf{Model} & \textbf{Accuracy} & \textbf{Cost} \\
        \midrule
        No retrieval & None & Local 14B & 40.6\% & 0.0680 \\
        RAG & Fixed & Local 14B & 47.2\% & 0.2290 \\
        GraphRAG & Fixed & Local 14B & 56.0\% & 0.4062 \\
        DEFRAG(E) & Adaptive & Edge adaptive & 47.8\% & 0.0341 \\
        DEFRAG(A) & Adaptive & Edge adaptive & 53.4\% & 0.0802 \\
        Cloud GraphRAG & Fixed & Cloud 235B & 62.2\% & 1.0000 \\
        \bottomrule
    \end{tabular*}
    \label{tab:revision_hp_comparison}
\end{table}

\begin{figure}[t]
\centering
\begin{subfigure}[t]{0.608\linewidth}
    \centering
    \includegraphics[width=\linewidth]{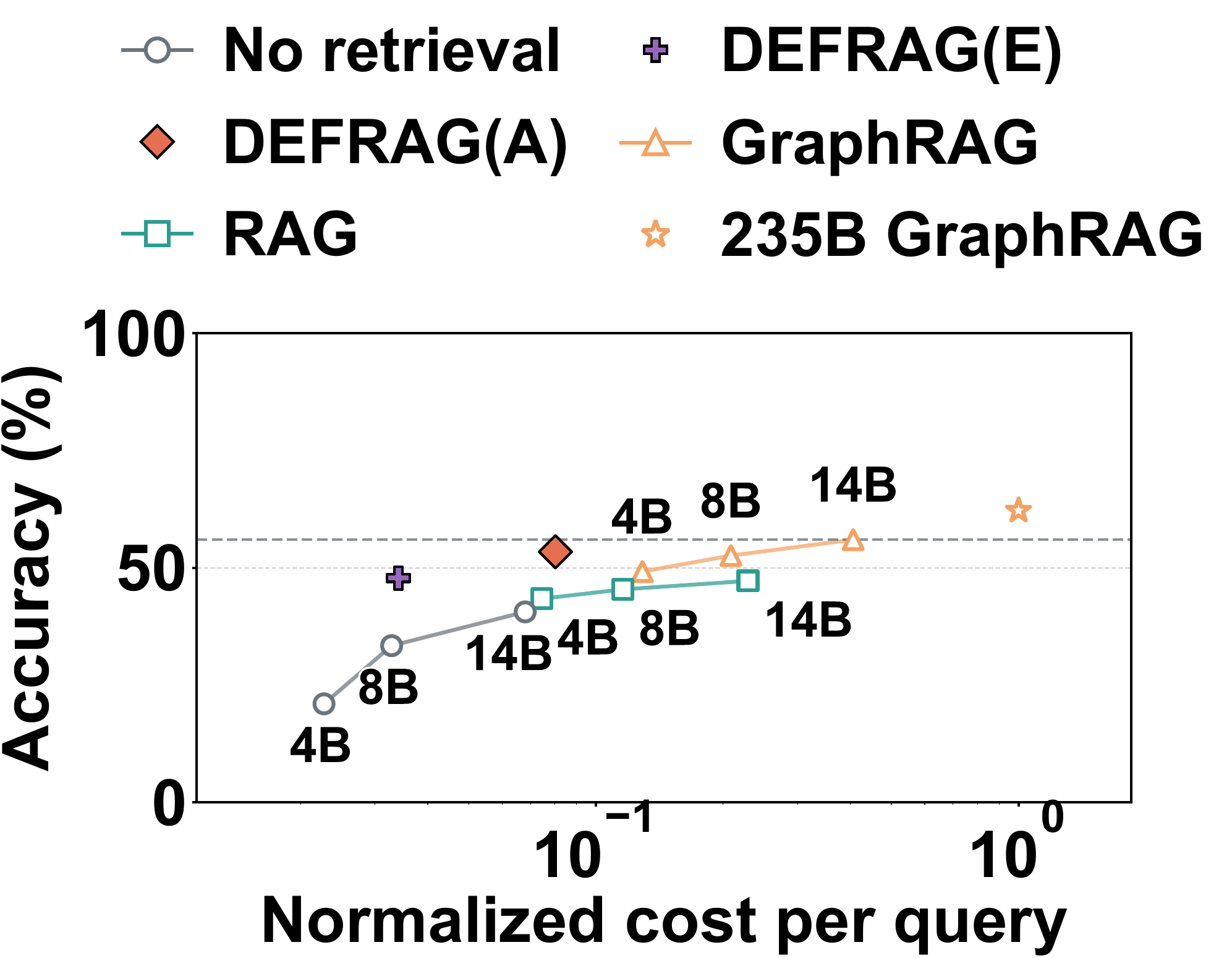}
    \caption{Accuracy--cost tradeoff.}
    \label{fig:revision_hp_generalization_frontier}
\end{subfigure}
\hfill
\begin{subfigure}[t]{0.377\linewidth}
    \centering
    \includegraphics[width=\linewidth]{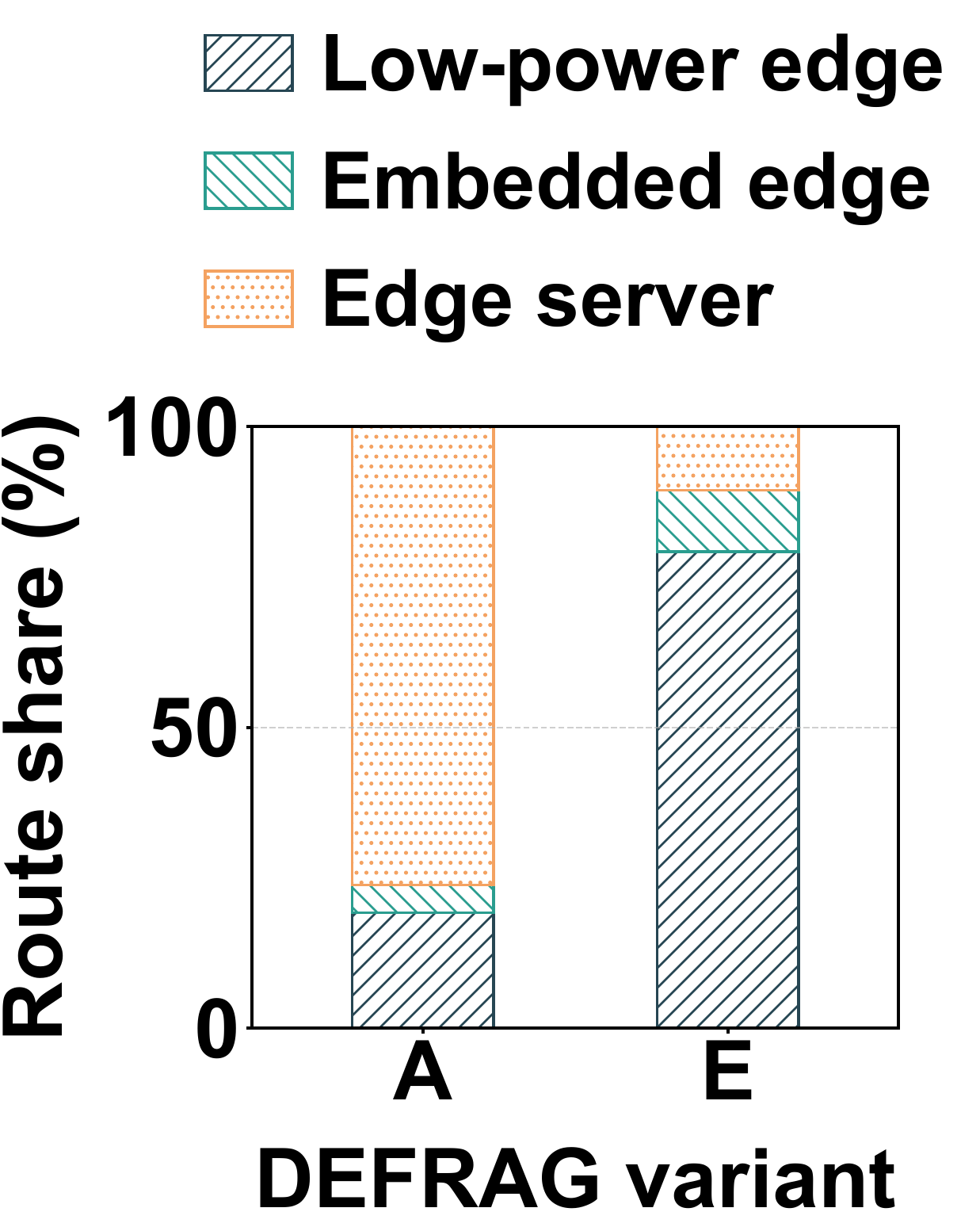}
    \caption{Route distribution.}
    \label{fig:revision_hp_generalization_routes}
\end{subfigure}
\caption{HarryPotter QA accuracy--cost tradeoff and route distribution.}
\label{fig:revision_hp_generalization}
\end{figure}

The three placements give comparable answer accuracy, indicating that collaborative retrieval still finds usable evidence when data placement is non-uniform. We then measure the communication overhead of this retrieval process from the requester origin. Capacity-aware placement is a capacity-oriented skewed baseline. It assigns frequently requested evidence groups to higher-capacity peers, yielding a 90.5\% requester-origin remote-fetch rate and 2.24\,s mean retrieval latency. Requester-aware placement uses requester demand as a placement signal. It places evidence closer to the requester origins that most often need it, reducing remote fetch to 61.3\% and mean retrieval latency to 0.85\,s, compared with 76.6\% and 2.03\,s for Balanced. Together, these results show that DEFRAG preserves answer quality without a perfectly balanced partition. The requester-aware result also demonstrates the potential of using requester demand to reduce communication overhead under skewed storage.

Finally, we add HarryPotter QA as a domain-specific generalization task. The task uses a local seven-book narrative corpus with reference QA pairs. It adds domain-specific evidence and narrative entities to the benchmark suite. Reference answers allow a controlled accuracy--cost comparison.

Table~\ref{tab:revision_hp_comparison} and Fig.~\ref{fig:revision_hp_generalization}(a) compare HarryPotter QA operating points across no retrieval, fixed RAG, fixed GraphRAG, DEFRAG(A/E), and cloud GraphRAG. The strongest local baseline is 14B GraphRAG, which reaches 56.0\% accuracy at normalized cost 0.4062. DEFRAG(A) reaches 53.4\% accuracy at cost 0.0802, or about 95\% of the local 14B GraphRAG accuracy at about one fifth of its cost. It also reaches 86\% of the cloud GraphRAG accuracy at about 8\% of the cloud cost. DEFRAG(E) moves further toward efficiency: its normalized cost is 0.0341, about 15\% of fixed RAG cost and 3\% of cloud GraphRAG cost.

Figure~\ref{fig:revision_hp_generalization}(b) explains this tradeoff through the selected model tiers. DEFRAG(A) routes 76\% of queries to the 14B edge-server tier, while DEFRAG(E) routes 79\% to the 1.7B low-power tier. This mirrors the NQ and HotpotQA results: the accuracy objective uses stronger models more often, while the efficiency objective keeps a lower-cost operating point. The HarryPotter result therefore supports DEFRAG's accuracy--cost tradeoff on an additional domain-specific QA corpus.

\section{Related Work}
\textbf{Cost-Effective Large Language Model Usage.} Reducing LLM deployment costs is an active research focus. Techniques like model quantization~\cite{xiao2023smoothquant,park2023lut} and pruning~\cite{ma2024llm}, distillation~\cite{gu2024minillm} cut costs but may reduce performance and require specialized hardware. Other strategies include caching LLM responses for frequent queries~\cite{zhu2023optimal,li2024scalm} or reusing key-value states during inference~\cite{liu2024cachegen,yao2024cacheblend}. Additionally, model multiplexing dynamically selects an appropriate model based on real-time query analysis~\cite{kim2023speculative,chenfrugalgpt,jeong2024adaptive}. However, these approaches primarily focus on reducing resource consumption during inference. They do not adequately address deploying LLMs effectively on edge devices with idle resources, particularly in achieving performance comparable to cloud-scale models.

\textbf{Retrieval-Augmented Generation (RAG) Enhancements.} RAG integrates external textual knowledge to enhance language model outputs~\cite{gao2023retrieval,singh2025agentic}. Early implementations, termed \textit{Naive RAG}, primarily employ sparse retrieval methods like TF-IDF and BM25, directly concatenating passages into prompts~\cite{lewis2020retrieval}. Later developments, or \textit{Advanced RAG}, incorporate dense retrieval such as Dense Passage Retrieval (DPR)~\cite{karpukhin2020dense} and iterative refinement introduce iterative feedback and complexity-based query refinement, improving multi-hop reasoning~\cite{yoran2023making,asai2024self,yan2024corrective,jeong2024adaptive}. \textit{Modular RAG} frameworks emphasize interchangeable modules for retrieval, reranking, and generation, facilitating hybrid strategies and external integrations~\cite{jin2025flashrag}. Furthermore, \textit{Graph-based RAG} methods utilize knowledge graph structures to enhance relational reasoning and structured retrieval for complex tasks~\cite{edge2024local,peng2024graph,liang2025kag}. While RAG methods can enhance accuracy, they often do so at the cost of increased resource consumption. Identifying effective trade-offs and selecting suitable RAG strategies for efficient deployment on edge nodes remains a key challenge. 

\textbf{LLM Resource Allocation Strategies in Edge Computing.}
Research on deploying LLMs in edge computing emphasizes optimal resource allocation by balancing latency, energy efficiency, and computational capabilities through techniques like task offloading, resource scheduling, and cloud-edge collaboration~\cite{wang2024end,qu2025mobile}. Multi-stage inference, for example, initially employs smaller models deployable on edge devices to handle general queries, escalating to larger LLMs when confidence is insufficient~\cite{wang2023tabi}. Active inference strategies dynamically offload resource-intensive LLM inference tasks across multiple edge nodes based on network bandwidth and processing capabilities~\cite{he2024large,zhang2024edgeshard}. Distributed on-device inference systems such as PRIMA.CPP demonstrate fast 30B--70B LLM inference on heterogeneous low-resource home clusters~\cite{li2026primacpp}. Incorporating RAG techniques at the edge further enhances LLM performance through federated multi-domain knowledge bases, adaptive retrieval and inference strategies, and by explicitly integrating logical relationships derived from knowledge graphs into context~\cite{shojaee2025federated,li2024eaco,ouyang2025adarag,fan2025minirag}. Edge computing with RAG has also been explored in emerging domains such as next-generation networks~\cite{huang2024toward}. Nevertheless, due to memory constraints on edge devices, knowledge coverage remains significantly limited. Therefore, efficient collaboration among decentralized edge nodes remains an open challenge and is crucial for achieving cloud-scale LLM performance.

Prior work typically optimizes retrieval or generation separately at the edge. By contrast, DEFRAG jointly optimizes both components, explicitly modeling the interaction between retrieval strategies and generation model capabilities. This unified system architecture enables more effective adaptation to resource constrained edge devices.

\section{Future Work}
This work takes an important step toward scalable and accessible LLM services at the edge. Looking ahead, several promising directions merit further exploration. Advancing dynamic integration and grouping mechanisms will be critical for supporting evolving edge networks, including automatic onboarding of new devices and efficient handling of cold-start scenarios. As knowledge bases and network scale increase, optimizing local caching and update strategies for lightweight knowledge graphs and vector stores will be essential; adaptive cache management that responds to storage constraints and access patterns will be a valuable avenue. In addition, advancing inference efficiency and seamless system integration on mobile devices will remain crucial to enable practical applications facing users.

\section{Conclusion}
In this paper, we have introduced DEFRAG, a decentralized edge collaboration architecture for RAG that enables accurate, cost-efficient, and scalable LLM deployment at the edge. By compressing and distributing lightweight knowledge graphs for retrieval and dynamically tuning generation parameters with Bayesian online learning, DEFRAG narrows the accuracy gap between edge-based SLMs and cloud-based LLMs while reducing costs and improving scalability. By enabling collaborative sharing of models and knowledge among the crowd, DEFRAG advances the democratization of LLM services from cloud to crowd.

\bibliographystyle{IEEEtran}
\bibliography{reference}

@inproceedings{xiao2023smoothquant,
      title={Smoothquant: Accurate and efficient post-training quantization for large language models},
      author={Xiao, Guangxuan and Lin, Ji and Seznec, Mickael and Wu, Hao and Demouth, Julien and Han, Song},
      booktitle={International Conference on Machine Learning (ICML)},
      year={2023},
    }

@inproceedings{park2023lut,
      title={Lut-gemm: Quantized matrix multiplication based on luts for efficient inference in large-scale generative language models},
      author={Park, Gunho and Kim, Minsub and Lee, Sungjae and Kim, Jeonghoon and Kwon, Beomseok and Kwon, Se Jung and Kim, Byeongwook and Lee, Youngjoo and Lee, Dongsoo and others},
      booktitle={International Conference on Learning Representations (ICLR)},
      year={2023}
    }

@article{ma2024llm,
      title={Llm-pruner: On the structural pruning of large language models},
      author={Ma, Xinyin and Fang, Gongfan and Wang, Xinchao},
      journal={International Conference on Neural Information Processing Systems (NeurIPS)},
      year={2024}
    }

@inproceedings{gu2024minillm,
      title={MiniLLM: Knowledge Distillation of Large Language Models},
      author={Gu, Yuxian and Dong, Li and Wei, Furu and Huang, Minlie},
      booktitle={International Conference on Learning Representations (ICLR)},
      year={2024}
    }

@inproceedings{zhu2023optimal,
      title={On optimal caching and model multiplexing for large model inference},
      author={Zhu, Banghua and Sheng, Ying and Zheng, Lianmin and Barrett, Clark and Jordan, Michael I and Jiao, Jiantao},
      booktitle={International Conference on Neural Information Processing Systems (NeurIPS)},
      year={2023}
    }

@inproceedings{li2024scalm,
      title={Scalm: Towards semantic caching for automated chat services with large language models},
      author={Li, Jiaxing and Xu, Chi and Wang, Feng and von Riedemann, Isaac M and Zhang, Cong and Liu, Jiangchuan},
      booktitle={International Symposium on Quality of Service (IWQoS)},
      year={2024},
    }

@inproceedings{liu2024cachegen,
      title={CacheGen: KV Cache Compression and Streaming for Fast Large Language Model Serving},
      author={Liu, Yuhan and Li, Hanchen and Cheng, Yihua and Ray, Siddhant and Huang, Yuyang and Zhang, Qizheng and Du, Kuntai and Yao, Jiayi and Lu, Shan and Ananthanarayanan, Ganesh and others},
      booktitle={International Conference on Applications,  Technologies, Architectures, and Protocols for  Computer Communication (SIGCOMM)},
      year={2024}
    }

@article{yao2024cacheblend,
      title={CacheBlend: Fast Large Language Model Serving with Cached Knowledge Fusion},
      author={Yao, Jiayi and Li, Hanchen and Liu, Yuhan and Ray, Siddhant and Cheng, Yihua and Zhang, Qizheng and Du, Kuntai and Lu, Shan and Jiang, Junchen},
      journal={arXiv preprint arXiv:2405.16444},
      year={2024}
    }

@article{kim2023speculative,
      title={Speculative decoding with big little decoder},
      author={Kim, Sehoon and Mangalam, Karttikeya and Moon, Suhong and Malik, Jitendra and Mahoney, Michael W and Gholami, Amir and Keutzer, Kurt},
      journal={International Conference on Neural Information Processing Systems (NeurIPS)},
      year={2023}
    }

@inproceedings{jeong2024adaptive,
      title={Adaptive-RAG: Learning to Adapt Retrieval-Augmented Large Language Models through Question Complexity},
      author={Jeong, Soyeong and Baek, Jinheon and Cho, Sukmin and Hwang, Sung Ju and Park, Jong-Cheol},
      booktitle={Conference of the North American Chapter of the Association for Computational Linguistics (NAACL)},
      year={2024},
    }

@article{chenfrugalgpt,
      title={FrugalGPT: How to Use Large Language Models While Reducing Cost and Improving Performance},
      author={Chen, Lingjiao and Zaharia, Matei and Zou, James},
      journal={Transactions on Machine Learning Research (TMLR)},
      year={2025},
    }

@article{gao2023retrieval,
      title={Retrieval-augmented generation for large language models: A survey},
      author={Gao, Yunfan and Xiong, Yun and Gao, Xinyu and Jia, Kangxiang and Pan, Jinliu and Bi, Yuxi and Dai, Yixin and Sun, Jiawei and Wang, Haofen and Wang, Haofen},
      journal={arXiv preprint arXiv:2312.10997},
      year={2023}
    }

@article{singh2025agentic,
      title={Agentic Retrieval-Augmented Generation: A Survey on Agentic RAG},
      author={Singh, Aditi and Ehtesham, Abul and Kumar, Saket and Khoei, Tala Talaei},
      journal={arXiv preprint arXiv:2501.09136},
      year={2025}
    }

@article{lewis2020retrieval,
      title={Retrieval-augmented generation for knowledge-intensive nlp tasks},
      author={Lewis, Patrick and Perez, Ethan and Piktus, Aleksandra and Petroni, Fabio and Karpukhin, Vladimir and Goyal, Naman and K{\"u}ttler, Heinrich and Lewis, Mike and Yih, Wen-tau and Rockt{\"a}schel, Tim and others},
      journal={International Conference on Neural Information Processing Systems (NeurIPS)},
      year={2020}
    }

@inproceedings{karpukhin2020dense,
      title={Dense Passage Retrieval for Open-Domain Question Answering},
      author={Karpukhin, Vladimir and Oguz, Barlas and Min, Sewon and Lewis, Patrick and Wu, Ledell and Edunov, Sergey and Chen, Danqi and Yih, Wen-tau},
      booktitle={Conference on Empirical Methods in Natural Language Processing (EMNLP)},
      year={2020},
    }

@inproceedings{asai2024self,
      title={Self-RAG: Learning to Retrieve, Generate, and Critique through Self-Reflection},
      author={Asai, Akari and Wu, Zeqiu and Wang, Yizhong and Sil, Avi and Hajishirzi, Hannaneh},
      booktitle={International Conference on Learning Representations (ICLR)},
      year={2024}
    }

@article{yan2024corrective,
      title={Corrective retrieval augmented generation},
      author={Yan, Shi-Qi and Gu, Jia-Chen and Zhu, Yun and Ling, Zhen-Hua},
      journal={arXiv preprint arXiv:2401.15884},
      year={2024}
    }

@inproceedings{yoran2023making,
      title={Making retrieval-augmented language models robust to irrelevant context},
      author={Yoran, Ori and Wolfson, Tomer and Ram, Ori and Berant, Jonathan},
      booktitle={International Conference on Learning Representations (ICLR)},
      year={2024}
    }

@inproceedings{jin2025flashrag,
      title={Flashrag: A modular toolkit for efficient retrieval-augmented generation research},
      author={Jin, Jiajie and Zhu, Yutao and Dou, Zhicheng and Dong, Guanting and Yang, Xinyu and Zhang, Chenghao and Zhao, Tong and Yang, Zhao and Wen, Ji-Rong},
      booktitle={Companion Proceedings of the ACM on Web Conference},
      year={2025}
    }

@article{edge2024local,
      title={From local to global: A graph rag approach to query-focused summarization},
      author={Edge, Darren and Trinh, Ha and Cheng, Newman and Bradley, Joshua and Chao, Alex and Mody, Apurva and Truitt, Steven and Metropolitansky, Dasha and Ness, Robert Osazuwa and Larson, Jonathan},
      journal={arXiv preprint arXiv:2404.16130},
      year={2024}
    }

@article{peng2024graph,
      title={Graph retrieval-augmented generation: A survey},
      author={Peng, Boci and Zhu, Yun and Liu, Yongchao and Bo, Xiaohe and Shi, Haizhou and Hong, Chuntao and Zhang, Yan and Tang, Siliang},
      journal={arXiv preprint arXiv:2408.08921},
      year={2024}
    }

@inproceedings{liang2025kag,
      title={Kag: Boosting llms in professional domains via knowledge augmented generation},
      author={Liang, Lei and Bo, Zhongpu and Gui, Zhengke and Zhu, Zhongshu and Zhong, Ling and Zhao, Peilong and Sun, Mengshu and Zhang, Zhiqiang and Zhou, Jun and Chen, Wenguang and others},
      booktitle={Companion Proceedings of the ACM on Web Conference},
      year={2025}
    }

@article{wang2024end,
      title={End-edge-cloud collaborative computing for deep learning: A comprehensive survey},
      author={Wang, Yingchao and Yang, Chen and Lan, Shulin and Zhu, Liehuang and Zhang, Yan},
      journal={IEEE Communications Surveys \& Tutorials (ICST)},
      year={2024},
    }

@article{qu2025mobile,
      title={Mobile edge intelligence for large language models: A contemporary survey},
      author={Qu, Guanqiao and Chen, Qiyuan and Wei, Wei and Lin, Zheng and Chen, Xianhao and Huang, Kaibin},
      journal={IEEE Communications Surveys \& Tutorials (ICST)},
      year={2025},
      publisher={IEEE}
    }

@inproceedings{wang2023tabi,
      title={Tabi: An efficient multi-level inference system for large language models},
      author={Wang, Yiding and Chen, Kai and Tan, Haisheng and Guo, Kun},
      booktitle={European Conference on Computer Systems (EuroSys)},
      year={2023}
    }

@article{he2024large,
      title={Large language models (LLMs) inference offloading and resource allocation in cloud-edge computing: An active inference approach},
      author={He, Ying and Fang, Jingcheng and Yu, F Richard and Leung, Victor C},
      journal={IEEE Transactions on Mobile Computing (TMC)},
      year={2024},
    }

@article{zhang2024edgeshard,
      title={Edgeshard: Efficient llm inference via collaborative edge computing},
      author={Zhang, Mingjin and Shen, Xiaoming and Cao, Jiannong and Cui, Zeyang and Jiang, Shan},
      journal={IEEE Internet of Things Journal (IOT)},
      year={2024},
    }

@inproceedings{shojaee2025federated,
      title={Federated Retrieval Augmented Generation for Multi-Product Question Answering},
      author={Shojaee, Parshin and Harsha, Sai Sree and Luo, Dan and Maharaj, Akash and Yu, Tong and Li, Yunyao},
      booktitle={International Conference on Computational Linguistics (COLING)},
      year={2025}
    }

@inproceedings{ouyang2025adarag,
      title={AdaRAG: Adaptive Optimization for Retrieval Augmented Generation with Multilevel Retrievers at the Edge},
      author={Tao Ouyang and Guihang Hong and Kongyange Zhao and Zhi Zhou and Weigang Wu and Zhaobiao Lv and Xu Chen},
      booktitle = {IEEE International Conference on Computer Communications (INFOCOM)},
      year= {2025},
    }

@article{li2024eaco,
      title={EACO-RAG: Edge-Assisted and Collaborative RAG with Adaptive Knowledge Update},
      author={Li, Jiaxing and Xu, Chi and Jia, Lianchen and Wang, Feng and Zhang, Cong and Liu, Jiangchuan},
      journal={arXiv preprint arXiv:2410.20299},
      year={2024}
    }

@article{huang2024toward,
      title={Toward Effective Retrieval Augmented Generative Services in 6G Networks},
      author={Huang, Xi and Tang, Yinxu and Li, Junling and Zhang, Ning and Shen, Xuemin Sherman},
      journal={IEEE Network},
      year={2024},
    }

@article{fan2025minirag,
      title={MiniRAG: Towards Extremely Simple Retrieval-Augmented Generation},
      author={Fan, Tianyu and Wang, Jingyuan and Ren, Xubin and Huang, Chao},
      journal={arXiv preprint arXiv:2501.06713},
      year={2025}
    }

@inproceedings{li2026primacpp,
      title={{\revise{Prima.cpp: Fast 30--70B LLM Inference on Heterogeneous and Low-Resource Home Clusters}}},
      author={{\revise{Z. Li, T. Li, W. Feng, R. Xiao, J. She, H. Huang, M. Guizani, H. Yu, Q. Ho, W. Xiang, and X. Liu}}},
      booktitle={{\revise{International Conference on Learning Representations (ICLR)}}},
      year={{\revise{2026}}}
    }

@inproceedings{yang2018hotpotqa,
      title={{HotpotQA}: A Dataset for Diverse, Explainable Multi-hop Question Answering},
      author={Yang, Zhilin and Qi, Peng and Zhang, Saizheng and Bengio, Yoshua and Cohen, William W. and Salakhutdinov, Ruslan and Manning, Christopher D.},
      booktitle={Conference on Empirical Methods in Natural Language Processing ({EMNLP})},
      year={2018}
    }

@article{kwiatkowski2019natural,
      title={Natural questions: a benchmark for question answering research},
      author={Kwiatkowski, Tom and Palomaki, Jennimaria and Redfield, Olivia and Collins, Michael and Parikh, Ankur and Alberti, Chris and Epstein, Danielle and Polosukhin, Illia and Devlin, Jacob and Lee, Kenton and others},
      journal={Transactions of the Association for Computational Linguistics (TACL)},
      year={2019},
    }

@inproceedings{jin2024long,
  title={Long-context llms meet rag: Overcoming challenges for long inputs in rag},
  author={Jin, Bowen and Yoon, Jinsung and Han, Jiawei and Arik, Sercan O},
  booktitle={International Conference on Learning Representations (ICLR)},
  year={2025},
}

@inproceedings{lan2010axiomatic,
      title={An axiomatic theory of fairness in network resource allocation},
      author={Lan, Tian and Kao, David and Chiang, Mung and Sabharwal, Ashutosh},
      booktitle = {IEEE International Conference on Computer Communications (INFOCOM)},
      year={2010},
    }

@inproceedings{wangpandalm,
  title={PandaLM: An Automatic Evaluation Benchmark for LLM Instruction Tuning Optimization},
  author={Wang, Yidong and Yu, Zhuohao and Yao, Wenjin and Zeng, Zhengran and Yang, Linyi and Wang, Cunxiang and Chen, Hao and Jiang, Chaoya and Xie, Rui and Wang, Jindong and others},
  booktitle={International Conference on Learning Representations (ICLR)},
  year={2024},
}

@misc{llama.cpp,
      author       = {Georgi Gerganov and contributors},
      title        = {llama.cpp: Port of Facebook's LLaMA model in C/C++},
      year         = {2023},
      howpublished = {\url{https://github.com/ggml-org/llama.cpp}},
    }

@article{qwen3,
        title={Qwen3 Technical Report}, 
        author={An Yang and Anfeng Li and Baosong Yang and Beichen Zhang and Binyuan Hui and Bo Zheng and Bowen Yu and Chang Gao and Chengen Huang and Chenxu Lv and others},
        journal = {arXiv preprint arXiv:2505.09388},
        year={2025}
    }

@misc{openai2025gpt41api,
      author       = {OpenAI},
      title        = {Introducing GPT-4.1 in the API},
      year         = {2025},
      howpublished = {\url{https://openai.com/index/gpt-4-1/}},
    }

@misc{flask-docs,
      author       = {Armin Ronacher and the Pallets team},
      title        = {Flask Documentation},
      organization = {Pallets Projects},
      year         = {2025},
      howpublished = {\url{https://flask.palletsprojects.com/}},
    }

@misc{ollama-api-docs,
      author       = {Ollama Team},
      title        = {Ollama API Documentation},
      year         = {2025},
      howpublished =  {\url{https://github.com/jmorganca/ollama/blob/main/docs/api.md}},
    }

@book{williams2006gaussian,
      title={Gaussian processes for machine learning},
      author={Williams, Christopher KI and Rasmussen, Carl Edward},
      publisher={MIT press Cambridge, MA},
      year={2006}
    }

@book{cover1999elements,
    author = {Cover, Thomas M. and Thomas, Joy A.},
    title = {Elements of information theory},
    year = {1991},
    publisher = {Wiley-Interscience},
    }

@inproceedings{srinivas2009gaussian,
      title={Gaussian process optimization in the bandit setting: No regret and experimental design},
      author={Srinivas, Niranjan and Krause, Andreas and Kakade, Sham M and Seeger, Matthias},
      booktitle={International Conference on Machine Learning (ICML)},
      year={2010}
    }

@inproceedings{li2010contextual,
      title={A contextual-bandit approach to personalized news article recommendation},
      author={Li, Lihong and Chu, Wei and Langford, John and Schapire, Robert E},
      booktitle={International World Wide Web Conferences (WWW)},
      year={2010}
    }

@article{cesa2013gang,
      title={A gang of bandits},
      author={Cesa-Bianchi, Nicolo and Gentile, Claudio and Zappella, Giovanni},
      journal={International Conference on Neural Information Processing Systems (NeurIPS)},
      year={2013}
    }

@article{bull2011convergence,
      title={Convergence rates of efficient global optimization algorithms.},
      author={Bull, Adam D},
      journal={Journal of Machine Learning Research (JMLR)},
      year={2011}
    }

@inproceedings{schaul2015prioritized,
      title={Prioritized experience replay},
      author={Schaul, Tom and Quan, John and Antonoglou, Ioannis and Silver, David},
      booktitle={International Conference on Learning Representations (ICLR)},
      year={2016}
    }

@inproceedings{zhu2024accelerating,
  title={Accelerating inference of retrieval-augmented generation via sparse context selection},
  author={Zhu, Yun and Gu, Jia-Chen and Sikora, Caitlin and Ko, Ho and Liu, Yinxiao and Lin, Chu-Cheng and Shu, Lei and Luo, Liangchen and Meng, Lei and Liu, Bang and others},
  booktitle={International Conference on Learning Representations (ICLR)},
  year={2025}
}

@inproceedings{groeneveld2020simple,
  title={A Simple Yet Strong Pipeline for HotpotQA},
  author={Groeneveld, Dirk and Khot, Tushar and Sabharwal, Ashish and others},
  booktitle={Conference on Empirical Methods in Natural Language Processing (EMNLP)},
  year={2020}
}

@inproceedings{samsi2023words,
      title={From words to watts: Benchmarking the energy costs of large language model inference},
      author={Samsi, Siddharth and Zhao, Dan and McDonald, Joseph and Li, Baolin and Michaleas, Adam and Jones, Michael and Bergeron, William and Kepner, Jeremy and Tiwari, Devesh and Gadepally, Vijay},
      booktitle={IEEE High Performance Extreme Computing Conference (HPEC)},
      year={2023}
    }

@inproceedings{reidys2025coach,
      title={Coach: Exploiting temporal patterns for all-resource oversubscription in cloud platforms},
      author={Reidys, Benjamin and Zardoshti, Pantea and Goiri, {\'I}{\~n}igo and Irvene, Celine and Berger, Daniel S and Ma, Haoran and Arya, Kapil and Cortez, Eli and Stark, Taylor and Bak, Eugene and others},
      booktitle={International Conference on Architectural Support for Programming Languages and Operating Systems (ASPLOS)},
      year={2025}
    }

@inproceedings{zhao2023scalable,
      title={Scalable tail latency estimation for data center networks},
      author={Zhao, Kevin and Goyal, Prateesh and Alizadeh, Mohammad and Anderson, Thomas E},
      booktitle={USENIX Symposium on Networked Systems Design and Implementation (NSDI)},
      year={2023}
    }

\begin{IEEEbiography}
    [{\includegraphics[width=1in,height=1.25in,clip,keepaspectratio]{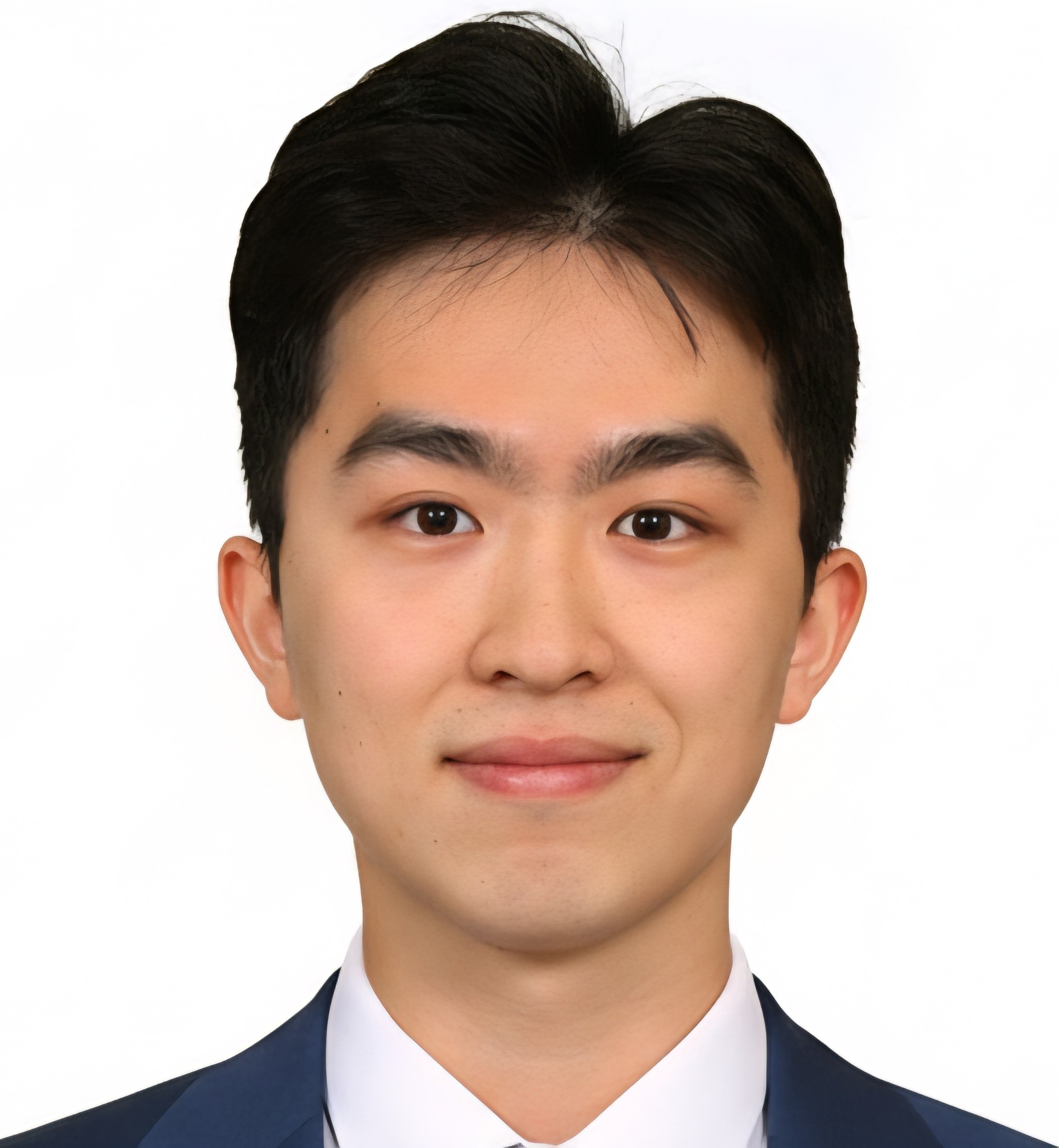}}]{Jiaxing Li}
    is currently a Ph.D. candidate in the School of Computing Science at Simon Fraser University, British Columbia, Canada, under the supervision of Prof. Jiangchuan Liu. He received his B.Sc. degree in Computing Science from Simon Fraser University in 2023. His research interests include large language model applications, cloud edge computing, and distributed machine learning systems.
\end{IEEEbiography}

\begin{IEEEbiography}
    [{\includegraphics[width=1in,height=1.25in,clip,keepaspectratio]{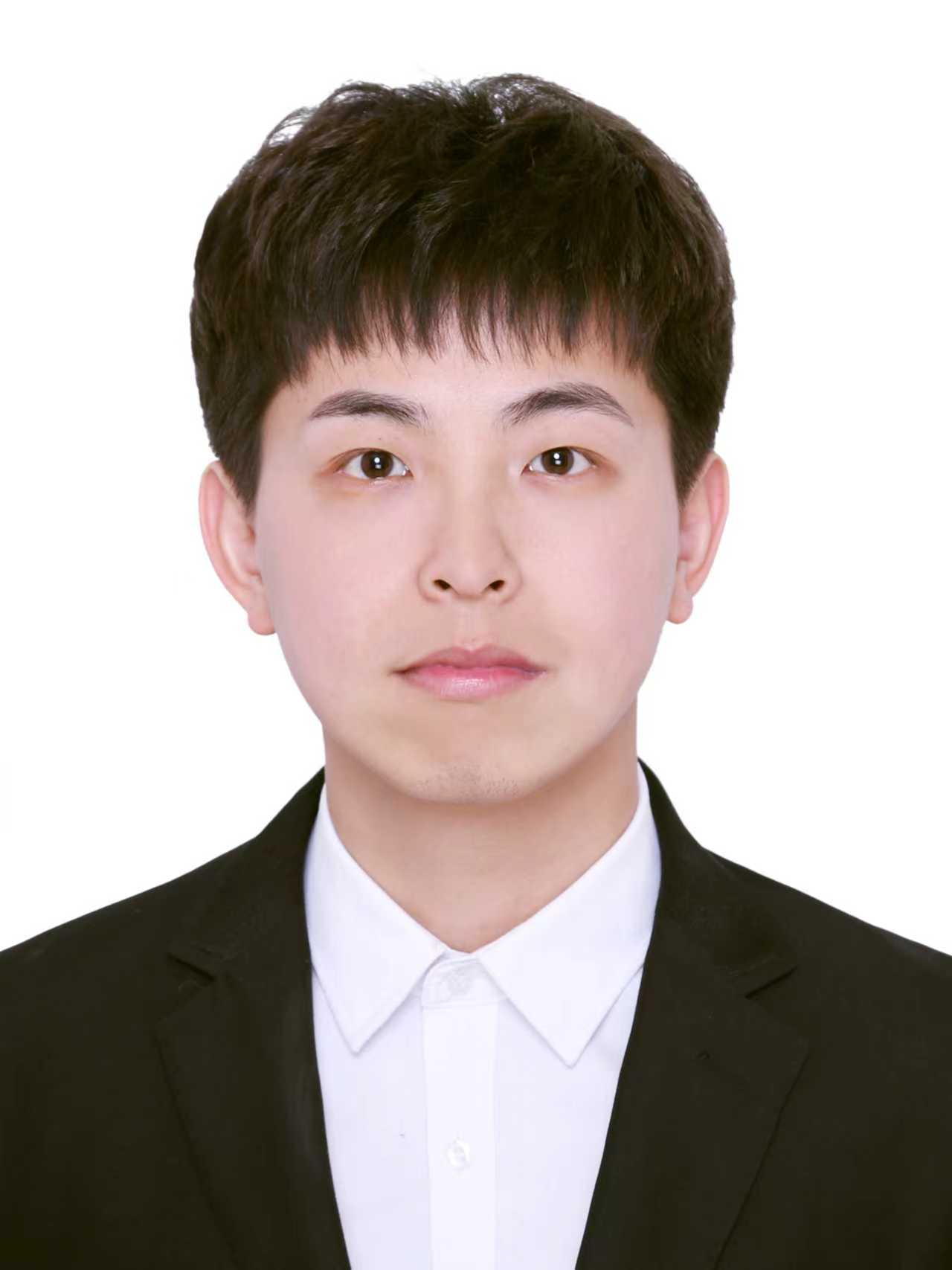}}]{Hengzhi Wang} 
    (Member, IEEE) is currently serving as an assistant professor in the College of Computer Science and Software Engineering at Shenzhen University in Shenzhen, China. He obtained his B.S. degree in software engineering from Jilin University in Changchun, China, in 2017, and his Ph.D. degree in computer science from Jilin University in Changchun, China, in 2023. During his doctoral studies, he also worked as a visiting Ph.D. student at the School of Computing Science, Simon Fraser University in British Columbia, Canada. His research interests primarily focus on spatial crowdsourcing, federated learning, and privacy protection.
\end{IEEEbiography}

\begin{IEEEbiography}
    [{\includegraphics[width=1in,height=1.25in,clip,keepaspectratio]{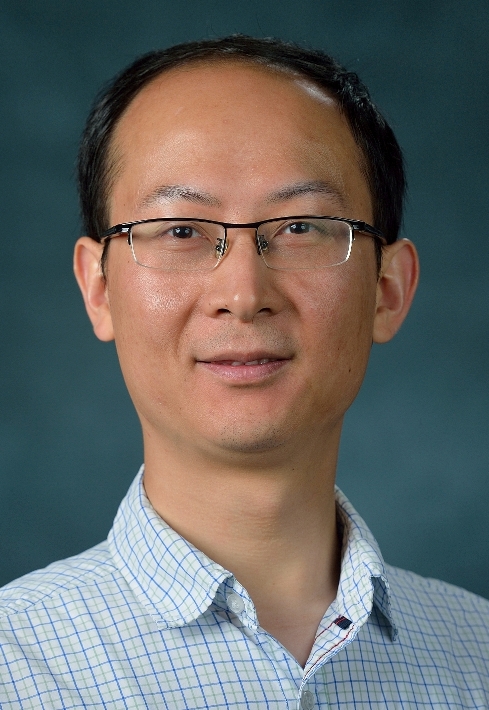}}]{Feng Wang}
    (Senior Member, IEEE) received both the Bachelor's degree and Master's degree in Computer Science and Technology from Tsinghua University, Beijing, China in 2002 and 2005, respectively. He received the PhD degree in Computing Science from Simon Fraser University, Burnaby, British Columbia, Canada in 2012. He is currently an Associate Professor in the Department of Computer and Information Science at the University of Mississippi, University, MS, USA. He is a Senior Member of IEEE. He is a recipient of IEEE ICME Quality Reviewer Award (2011), ACM BuildSys Best Paper Award (2018), ACM/IEEE IWQoS Best Paper Award Runner-up Award (2022), IEEE CloudCom Special Section Best Paper Award (2025), and IEEE INFOCOM Distinguished TPC Member Awards (2020-2022). He is a Technical Committee Member of Elsevier Computer Communications. He served as Program Vice Chair in International Conference on Internet of Vehicles (IOV) 2014, as TPC Co-chair in IEEE CloudCom 2017 for Internet of Things and Mobile on Cloud track, as Publication Co-chair in ACM/IEEE IWQoS 2024. He also serves as TPC member in various international conferences such as IEEE INFOCOM, IEEE/ACM IWQoS, ACM Multimedia, IEEE ICC, IEEE GLOBECOM and IEEE ICME.
\end{IEEEbiography}

\begin{IEEEbiography}
    [{\includegraphics[width=1in,height=1.25in,clip,keepaspectratio]{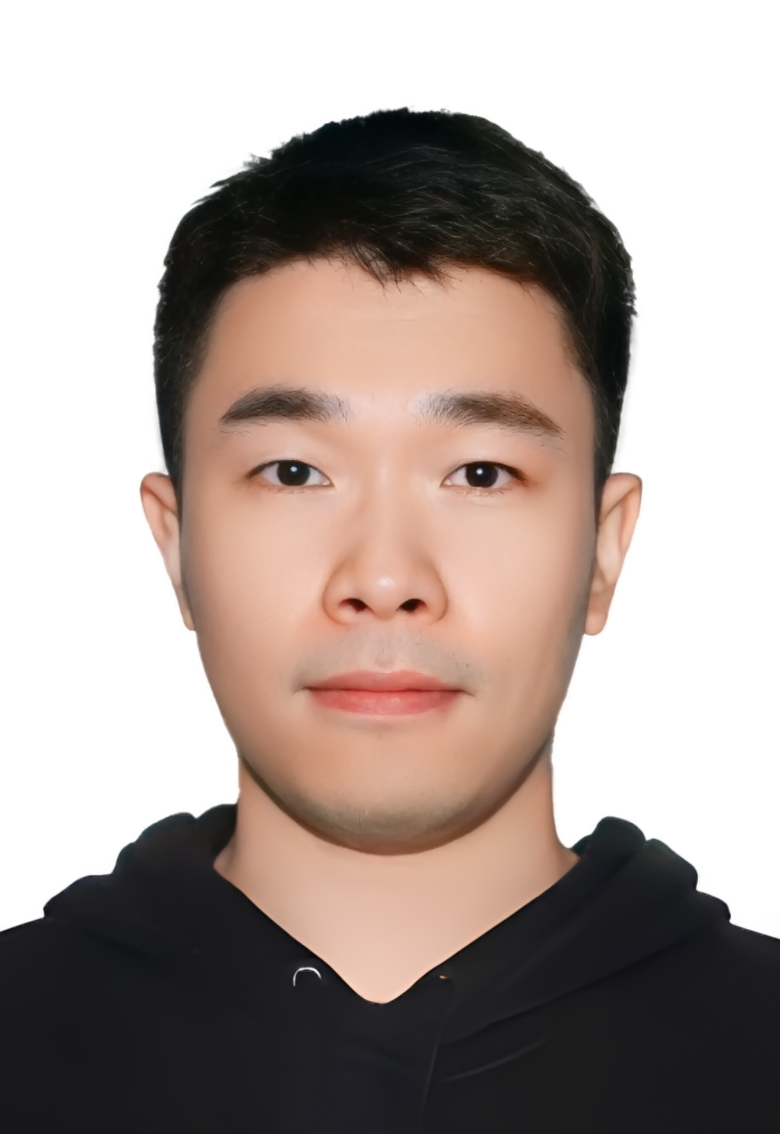}}]{Chi Xu}
    received the B.Sc. degree in software engineering from Xidian University, Xi'an, China, and the M.Sc. degree in computing science from Simon Fraser University. He is currently pursuing his Ph.D. degree in computing science at Simon Fraser University, Canada. His research spans applied artificial intelligence and networked systems, with a recent focus on Physical AI for challenging field environments.
\end{IEEEbiography}

\begin{IEEEbiography}
    [{\includegraphics[width=1in,height=1.25in,clip,keepaspectratio]{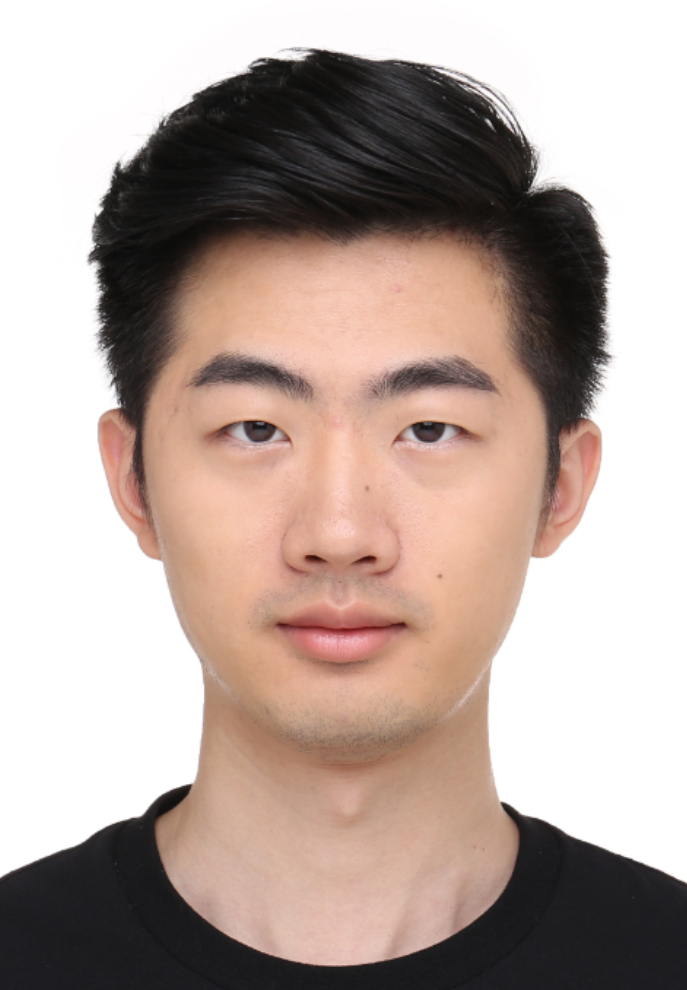}}]{Danyang Song}
    is currently a Ph.D. student in the Department of Electrical and Electronic Engineering at the University of Hong Kong. Prior to HKU, he received his MSc degree in Computing Science from Simon Fraser University in 2023 and Dual B.E degree in Computer Science from Zhejiang University and Simon Fraser University in 2018. His research interests include Artificial Intelligence, and edge computing under industrial scenarios.
\end{IEEEbiography}

\begin{IEEEbiography}
    [{\includegraphics[width=1in,height=1.25in,clip,keepaspectratio]{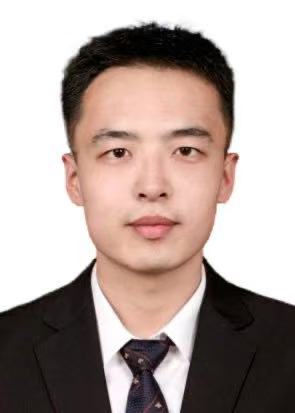}}]{Ruixiao Zhang}
    now is a Research Scientist at ByteDance Multimedia Lab in San Diego. Previously, he was a postdoctoral researcher at the University of Illinois Urbana-Champaign (supervised by Prof. Klara Nahrstedt), the University of Hong Kong (supervised by Prof. Chuan Wu), and the Hong Kong University of Science and Technology (supervised by Prof. Bo Li). He also collaborates closely with Prof. Jiangchuan Liu at Simon Fraser University. He received his Ph.D. in Computer Science from Tsinghua University in 2022, advised by Prof. Lifeng Sun, and his B.E. in Electronic Engineering from Tsinghua University in 2017. His research focuses on networking-driven optimization of multimedia systems, with an emphasis on robust and efficient video delivery over real-world networks under packet loss, bandwidth constraints, and time-varying conditions.
\end{IEEEbiography}

\begin{IEEEbiography}
    [{\includegraphics[width=1in,height=1.25in,clip,keepaspectratio]{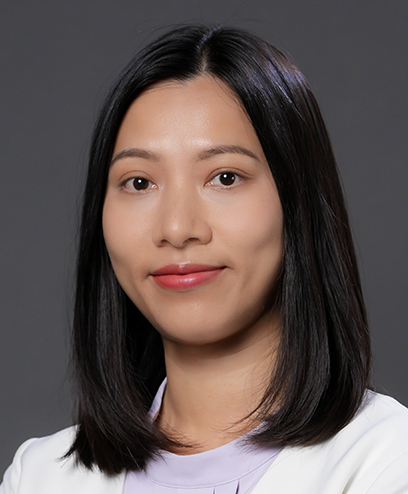}}]{Edith C.H. Ngai}
    (Senior Member, IEEE) is currently an Associate Professor in the Department of Electrical and Electronic Engineering, The University of Hong Kong. Before joining HKU in 2020, she was an Associate Professor in the Department of Information Technology, Uppsala University, Sweden. Her research interests include Internet-of-Things, edge intelligence, and smart cities. She was a VINNMER Fellow awarded by Swedish Governmental Research Funding Agency VINNOVA in 2009. Her co-authored papers received a Best Paper Award in QShine 2023, Best Paper Runner-Up Awards in IEEE IWQoS 2010 and ACM/IEEE IPSN 2013, and Best paper candidate in ACM BuildSys 2024. She was an Area Editor of IEEE Internet of Things Journal from 2020 to 2022. She is currently an Associate Editor in IEEE Transactions of Mobile Computing, IEEE Network, and IEEE Transactions of Industrial Informatics. She served as a program chair in ACM womENcourage 2015 and a TPC co-chair in IEEE SmartCity 2015, IEEE GreenCom 2022, IEEE/ACM IWQoS 2024, IEEE CloudCom 2025. She received a Meta Policy Research Award in Asia Pacific in 2022. She was selected as one of the N²Women Stars in Computer Networking and Communications in 2022. She is a Distinguished Lecturer in IEEE Communication Society in 2023-2024. She is an ACM Distinguished Member in 2025.
\end{IEEEbiography}

\begin{IEEEbiography}
    [{\includegraphics[width=1in,height=1.25in,clip,keepaspectratio]{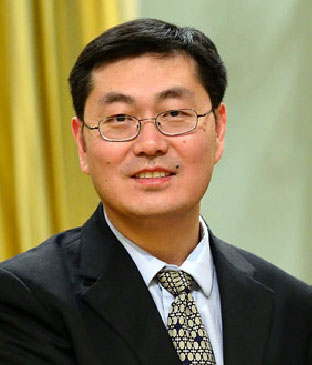}}]{Jiangchuan Liu}
    (Fellow, IEEE) is a Professor in the School of Computing Science, Simon Fraser University, British Columbia, Canada. He is a Fellow of Royal Society of Canada, a Fellow of The Canadian Academy of Engineering, an IEEE Fellow, and an NSERC E.W.R. Steacie Memorial Fellow. He was an EMC-Endowed Visiting Chair Professor of Tsinghua University (2013-2016) and is a Distinguished Guest Professor of Tsinghua Shenzhen International Graduate School (2022-). In the past he worked as an Assistant Professor at The Chinese University of Hong Kong (2003-2004) and as a research fellow at Microsoft Research Asia. He received the BEng degree (Magna cum laude) from Tsinghua University in 1999, and the PhD degree from The Hong Kong University of Science and Technology in 2003, both in computer science. He is a co-recipient of the inaugural Test of Time Paper Award of IEEE INFOCOM (2015), IEEE ICDCS Distinguished Paper Award (2024), ACM SIGMM TOMCCAP Nicolas D. Georganas Best Paper Award (2013), and ACM Multimedia Best Paper Award (2012). His research interests include intelligent multimedia computing and networking, cloud and edge computing, social networking, online gaming, and wireless mobile and space networking. He has served on the editorial boards of IEEE/ACM TON, IEEE TNSE, TMM, TBD, COMST, and IOTJ. He was a Steering Committee member of IEEE TMC, and Steering Committee Chair of IEEE/ACM IWQoS (2015-2017). He was TPC Chair of IEEE INFOCOM 2021 and General Chair of INFOCOM 2024. 
    
\end{IEEEbiography}

\end{document}